\newtheorem{lemma}{Lemma}
\def\BibTeX{{\rm B\kern-.05em{\sc i\kern-.025em b}\kern-.08em
    T\kern-.1667em\lower.7ex\hbox{E}\kern-.125emX}}
\newcommand{\rev}[1]{{\color{black}#1}} 
\newcommand{\minorrev}[1]{{\color{black}#1}} 
\begin{document}

\title{Optimizing Conditional Handovers via Meta-Learning}
\title{Learning-Driven Handover Management for 6G: Joint Optimization of Traditional and Conditional HOs in O-RAN}
\title{Meta-Learning-Enhanced Mobility Management in O-RAN}
\title{Meta-Learning for Handover Optimization in O-RAN: A Unified Framework for NextGen Mobility}
\title{Meta-Learning for Handover Control in NextGen O-RAN}
\title{Meta-Learning-Based Handover Management in NextG O-RAN}

\author{Michail~Kalntis, ~George~Iosifidis, ~Jos\'e~Su\'arez-Varela, ~Andra~Lutu, ~Fernando~A.~Kuipers
\thanks{M. Kalntis, G. Iosifidis, and F. A. Kuipers are with Delft University of Technology, The Netherlands (\{m.kalntis, g.iosifidis, f.a.kuipers\}@tudelft.nl).

J. Su\'arez-Varela and A. Lutu are with Telef\'onica Research, Spain (emails: \{jose.suarez-varela, andra.lutu\}@telefonica.com).}}


\maketitle

\begin{abstract}
While traditional handovers (THOs) have served as a backbone for mobile connectivity, they increasingly suffer from failures and delays, especially in dense deployments and high-frequency bands. To address these limitations, 3GPP introduced Conditional Handovers (CHOs) \rev{that enable} proactive cell reservations and user-driven execution. However, both \rev{handover (HO)} types present intricate trade-offs in signaling, resource usage, and reliability. This paper presents \rev{unique}, countrywide mobility management datasets from a \rev{top-tier} mobile network operator \rev{(MNO)} that offer fresh insights into these issues and call for adaptive and robust HO control in next-generation networks. Motivated by these findings, we propose \texttt{CONTRA}, a \rev{framework that, for the first time, jointly optimizes THOs and CHOs within the \mbox{O-RAN} architecture}. \rev{We study two variants of \texttt{CONTRA}: one where users are a priori assigned to one of the HO types, reflecting distinct service or user-specific requirements, as well as a more dynamic formulation where the controller decides on-the-fly the HO type, based on system conditions and needs.} To this end, it relies on a practical meta-learning algorithm that adapts to runtime observations and guarantees performance comparable to an oracle with perfect future information (universal no-regret). \texttt{CONTRA} is specifically designed for near-real-time deployment as an \mbox{O-RAN} xApp and aligns with the 6G goals of flexible and intelligent control. Extensive evaluations leveraging \rev{crowdsourced} datasets show that \texttt{CONTRA} improves user throughput and reduces \rev{both THO and CHO switching costs}, outperforming 3GPP-compliant \rev{and Reinforcement Learning (RL)} baselines 
in dynamic and real-world scenarios.
\end{abstract}

\begin{IEEEkeywords}

Mobility Management, O-RAN, Conditional Handovers, Meta-Learning, Network Datasets, Data Analytics.
\end{IEEEkeywords}

\IEEEpeerreviewmaketitle

\section{Introduction}\label{sec:intro}

\subsection{Motivation}

Mobility management is a cornerstone of mobile communication systems, ensuring service continuity as users move through diverse coverage areas \rev{of deployed \textit{cells}}. At the heart of this capability lies the \emph{handover} (HO) mechanism, which transfers active sessions from one cell to another. Despite decades of optimization and standardization \cite{3gpp_38_300, 3gpp_36_300}, HOs continue to face significant challenges in today's networks. In particular, dense deployments, heterogeneous radio access technologies (RATs), and the shift to higher frequencies increase the likelihood of HO delays and failures \cite{kalntis_imc24, stanczak22}. These issues are compounded by the increasingly dynamic nature of user mobility and volatile signal conditions, which are suboptimally handled by existing \emph{reactive} traditional HO (THO) procedures, as evidenced in our analysis with real-world data in Sec. \ref{sec:data_collection_analysis}.

While the 5G architecture introduced flexibility \cite{wanshi_5G, Wong_Schober_Ng_Wang_2017}, the fundamental approach to HO has remained largely unchanged. However, the 6G vision, expressed in various 3GPP workshops and white papers \cite{3gpp-6g, ericsson_perf_guarantees, chen_5G6G, wanshi_5G6G, 3gpp_HO_vestel},
paints a different picture, where mobility should be \emph{intelligent, proactive, and 
relying on native Artificial Intelligence (AI)}. 
To support highly demanding 6G use cases, the network must adapt in real time to user requirements and the current network conditions~\cite{ericsson_robustness}.
Specifically, seamless mobility across different service domains, ultra-reliable, low-latency HO, and resource efficiency are identified as critical enablers for 6G \cite{wanshi_5G6G, chen_5G6G}. We thus call for a \emph{paradigm shift} in mobility management, underpinned by 
(near) real-time data-driven control.

\begin{figure}[t]
    \centering
    \includegraphics[width=0.51\columnwidth]{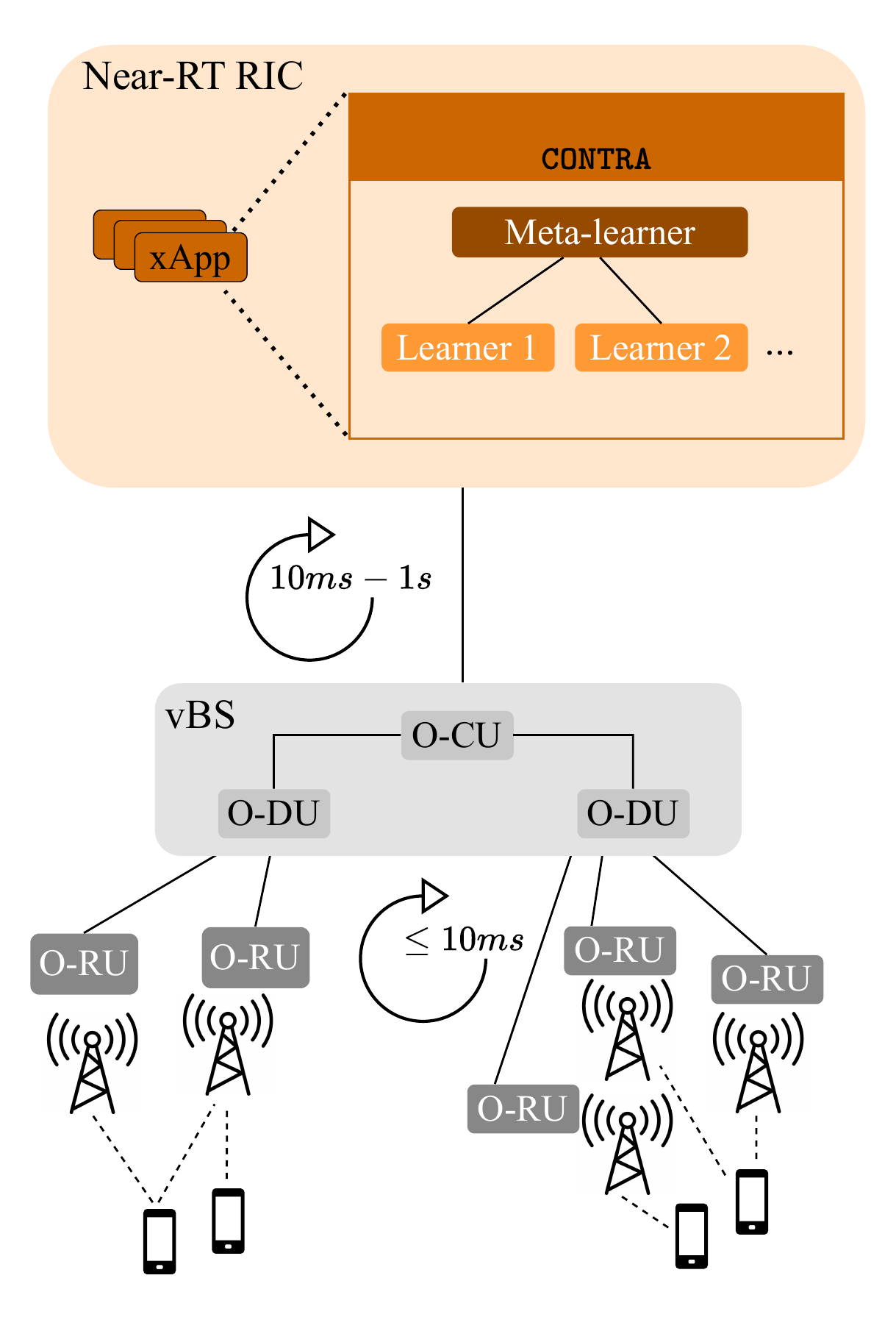}
   \vspace{-1mm}
    \caption{\texttt{CONTRA} deployed as an xApp in near-RT O-RAN RIC.}
    \vspace{-2mm}
    \label{fig:oran_architecture}
\end{figure}

To mitigate these limitations, 3GPP introduced Conditional Handovers (CHOs) \cite{3gpp_38_300, ericsson_CHO_20}. The main idea is to \textit{proactively} reserve resources
in multiple \textit{candidate} cells while signal conditions are favorable, and delegate the final HO decision to the user. This novel approach of \textit{network-configured, user-decided HOs}, instead of the traditional \textit{network-configured, user-assisted} HOs, has shown to reduce HO delays and failures \cite{iqbal_beamforming23, leo_juan22}. At the same time, CHOs introduce new challenges: selecting the optimal set of cells to prepare is highly context-sensitive; preparing too many cells leads to resource
overutilization, while preparing too few risks service interruption due to fallback to traditional HOs~\cite{prado_CHOopt23, stanczak22, iqbal22, martikainen18}. As CHO policies remain non-adaptive in practice, the trade-offs they involve are often suboptimally managed.

In this paper, we argue that next-generation (NextG) networks will benefit from AI-based solutions that jointly optimize traditional and conditional HOs and adapt the HO strategy to the current user requirements. In fact, 6G early releases are expected to include both network-triggered and user-triggered HO solutions \cite{ericsson_perf_guarantees}. At the same time, the rise of \emph{O-RAN} provides a practical vehicle for deploying such intelligence since HO logic can now be implemented as near-real-time xApps, enabling a flexible and programmable AI-native radio access network intelligent controller (near-RT RIC); see Figure \ref{fig:oran_architecture}. This controller communicates with the O-RAN-compliant central, distributed, and radio units (O-CU, O-DU, and O-RU, respectively), delegating the decisions to multiple cells and users. This architectural openness and AI-based control creates a timely opportunity to rethink mobility management for NextG networks. 

In summary, in this paper, we answer the question: \emph{How can we design a robust and principled AI-driven framework for jointly optimizing traditional and conditional handovers in NextG \mbox{O-RAN} architectures?} \rev{This question is motivated by the new mobility needs in 6G, technological developments in HO mechanisms, new AI-based algorithms that can potentially enable HO optimization under realistic conditions, and the lack of unified THO and CHO mechanisms.}

\subsection{Methods \& Contributions}

To answer this question, we first analyze new countrywide datasets of mobility events from a top-tier mobile network operator (MNO) in a European country. Unlike prior studies that focus on limited end-user measurement campaigns or confined regions (see Sec.~\ref{sec:related_work}), our dataset spans 13.5K cells and 40M users, and provides comprehensive visibility into {handover failures (HOFs), HO delays, signal fluctuations, and cell heterogeneity}. Employing a statistical analysis, we quantify the conditional effect on HOFs and HO delays as a function of cell's frequency band, vendor, and location/time. We observe a Pareto-like distribution in signal fluctuation: for 20\% of movements, the Signal-to-Interference-plus-Noise Ratio (SINR) varies by 100\% within 1 s, which makes them ideal candidates for CHOs. These insights confirm the need for \rev{adaptive and robust} HO control at user-level granularity and motivate the solution of this work.

\rev{We next introduce a first-of-its-kind} \emph{unified modeling framework} that jointly captures both traditional (explicit) and conditional (implicit) HO processes. Our model represents the decisions of a controller \cite{3gpp_HO_vestel} in an O-RAN setting and accounts for user throughput, cell load, signaling overhead, and service delay due to HOs (switching costs). \rev{We study two variants of \texttt{CONTRA}: one with a priori HO type assignment per user, reflecting distinct service or user-specific requirements, and another where the controller determines on-the-fly the HO type based on system dynamics.} The key novelty lies in {treating HO type selection, candidate cell preparation, and cell association as a single online learning problem without relying on knowledge from underlying stochastic processes (e.g., SINR fluctuations, cost of signaling). We model the resource trade-offs and performance dependencies between HO types and across users, enabling per-user decision-making at the granularity of O-RAN near-real-time control loops.

To solve this problem, we propose \texttt{CONTRA}, a meta-learning-based algorithm (see Figure \ref{fig:oran_architecture}) that maintains a pool of learners \rev{(i.e., experts)} tuned for different signal conditions and mobility patterns, and employs a meta-learner to track their performance. \rev{This design enables \emph{adaptation} to changing network environments, types of user equipment (UE) and cells, without prior statistics}, and ensures performance commensurate with that of an oracle with knowledge of the future. Our approach is based on the theory of online convex optimization \rev{(OCO)} \cite{hazan-book} and our formulation builds on recent advances in \emph{smoothed online learning} \rev{(SOL)} \cite{zhang-smoothed-ol}. The proposed algorithm provides theoretical guarantees in terms of \emph{expected dynamic regret}, a rigorous measure of how well the algorithm performs compared to the oracle, under a vast range of scenarios for the changing parameters. Finally, we evaluate \texttt{CONTRA} \rev{using the datasets from the top-tier MNO and actual users}. Benchmarking it \rev{against 3GPP-compliant and Reinforcement Learning (RL) based THO and CHO mechanisms}, we find that it provides higher user throughput, lower signaling cost, and improved robustness, particularly in volatile \rev{and real SINR} environments. In summary, our contributions are:

\noindent$\bullet$ \rev{We present and analyze countrywide mobility datasets to shed light on parameters, such as frequency bands, vendors, and location/time, that affect HO failures and delays.}

\noindent$\bullet$ Motivated by the 6G vision and the analyzed datasets, we model HO control as a learning problem and propose \texttt{CONTRA}, the first unified THO/CHO orchestration framework for real-time and robust HO control in NextG O-RAN. \rev{We study two variants of \texttt{CONTRA}: a static one with predefined HO types per user, and a dynamic one, where the controller decides the HO type on-the-fly.} Our solution relies on a meta-learner that is oblivious \rev{to UE types, mobility, and network conditions}, offering performance guarantees (expected dynamic regret). 

\minorrev{\noindent$\bullet$ We release the source code\footnote{\url{https://github.com/MikeKalnt/conditional_traditional_handover_management}} of our implementation online, under a permissive free software license, along with detailed documentation.}

\noindent$\bullet$ We evaluate \texttt{CONTRA} using crowdsourced data and multiple \rev{scenarios against 3GPP-compliant and RL benchmarks}. 
The experiments highlight \texttt{CONTRA}’s \emph{efficacy, deployability, and alignment with 6G goals} of intelligent \rev{Radio Access Network (}RAN\rev{)}} functions.

\noindent\textbf{Organization}. \minorrev{Sec. \ref{sec:related_work} discusses the related work} and Sec. \ref{sec:background} provides background on \rev{T}HOs and CHOs. 
\rev{Sec. \ref{sec:data_collection_analysis} presents our datasets and the findings from their analysis.} Sec. \ref{sec:system_model} introduces the main system model and problem formulation, and in Sec. \ref{sec:algorithm}, we present the proposed solution for this first problem. \rev{Sec. \ref{sec:model_extension} extends our study to the model with dynamic handover type selection} and we evaluate our solutions in Sec. \ref{sec:evaluation}. Our study concludes in Sec. \ref{sec:conclusions} and \rev{the Appendix includes lemma proofs omitted from the main text.}

\smallskip
\section{Related Work}\label{sec:related_work}

\noindent\textbf{{Measurements \& Traditional Handovers.}}
HOs are mainly studied with traces from UEs, which are inevitably limited to certain manufacturers, areas and devices \cite{vivisecting_2022}. Recent large-scale, network-side studies \cite{kalntis_imc24} provide more visibility, but do not propose models or solutions. Prior works have addressed the joint optimization of throughput and HOs \cite{andrews-association, andrews-globecom21, choi-TWC15, kelleler-jsac23} \rev{\cite{ppo_scc}}, offering important insights into association strategies under delay constraints. Our objective aligns with these studies, and we extend their models to capture the delay characteristics and costs associated with THOs and CHOs, \rev{and heterogeneity across cells and UEs}, by analyzing countrywide datasets. Separately, recent work has addressed HOs under minimal assumptions via online learning \cite{kalntis_infocom25}, but without considering CHOs. \rev{Our approach builds on these works by incorporating, for the first time, THO and CHO characteristics into a unified, adaptive learning framework.}

\noindent\textbf{{Conditional Handovers.}}
CHOs have been proposed as a solution for non-terrestrial networks \cite{leo_juan22, leo_saglam23, leo_yang24}, 5G NR-unlicensed systems \cite{stanczak22}, fast-moving users \cite{stanczak22}, as well as in beamforming and contention-free random access \cite{iqbal_beamforming23, stanczak23_cfra}. Proposals for improving the CHO mechanism include \cite{martikainen18, samsung_cho24} which tweak the CHO thresholds to decrease HOFs; \cite{fiandrino23, stanczak23_history} which use historical handover data to decide the cell preparations; and \cite{prado_echo21} which employs UE trajectory prediction to optimize the CHO decisions. Machine Learning-assisted approaches include \cite{lee_DL20, park24} which predict ideal association strategies based on SINR data or predict the SINRs values. These important works, however, propose heuristic solutions (no optimality guarantees) and/or rely on historical data for offline training of models, which in practice may be unavailable or non-representative of encountered conditions.

In contrast, motivated by our measurements that show this problem is dynamic across UEs, cells, and time, we leverage an adaptive optimization framework. Unlike other works that explicitly avoid decision changes in cell preparations \cite{iqbal22, iqbal23_FCHO_hand_blockage} or propose static optimization formulations \cite{prado_CHOopt23} which, unavoidably, rely on heavy assumptions (static and known parameters), we propose a tunable approach to the network's preference switching model that can cope with time-varying and unknown cost values. In \cite{wiopt25_kalntis}, a meta-learning approach with minimal assumptions is used but focuses only on CHOs and omits a fair scheduler for allocation, as shown in eq. \eqref{eq:utility}.

\noindent\textbf{{Meta-Learning.}}
Meta-learning is finding increasing applications in communication systems due to its robustness to distribution shifts and fast-adaptation \cite{chen2023learning}. We refer to \cite{meta-learning-infocom23} for merging proposed actions for management and orchestration (MANO) operations;  \cite{metalearning_embedding} for beamforming adaptation and MIMO systems; \cite{metalearning_qing} for user-level traffic prediction over a short time horizon; \cite{walid_uav_metalearning} for unmanned aerial vehicle (UAV) networks; \cite{metalearning_federated} for IoT devices learning together; \cite{metalearning_loadbalancing} for load balancing; and \cite{meta-learning-rl-TVT24} for handovers in vehicle-to-network communications. Here, we utilize the dynamic version of this tool, combined with online learners, so that optimal CHO and THO decisions can be learned on-the-fly, achieving rigorous theoretical and practical performance guarantees.

\section{Background}\label{sec:background}

\begin{figure}[t]
    \centering
    \includegraphics[width=0.9\columnwidth]{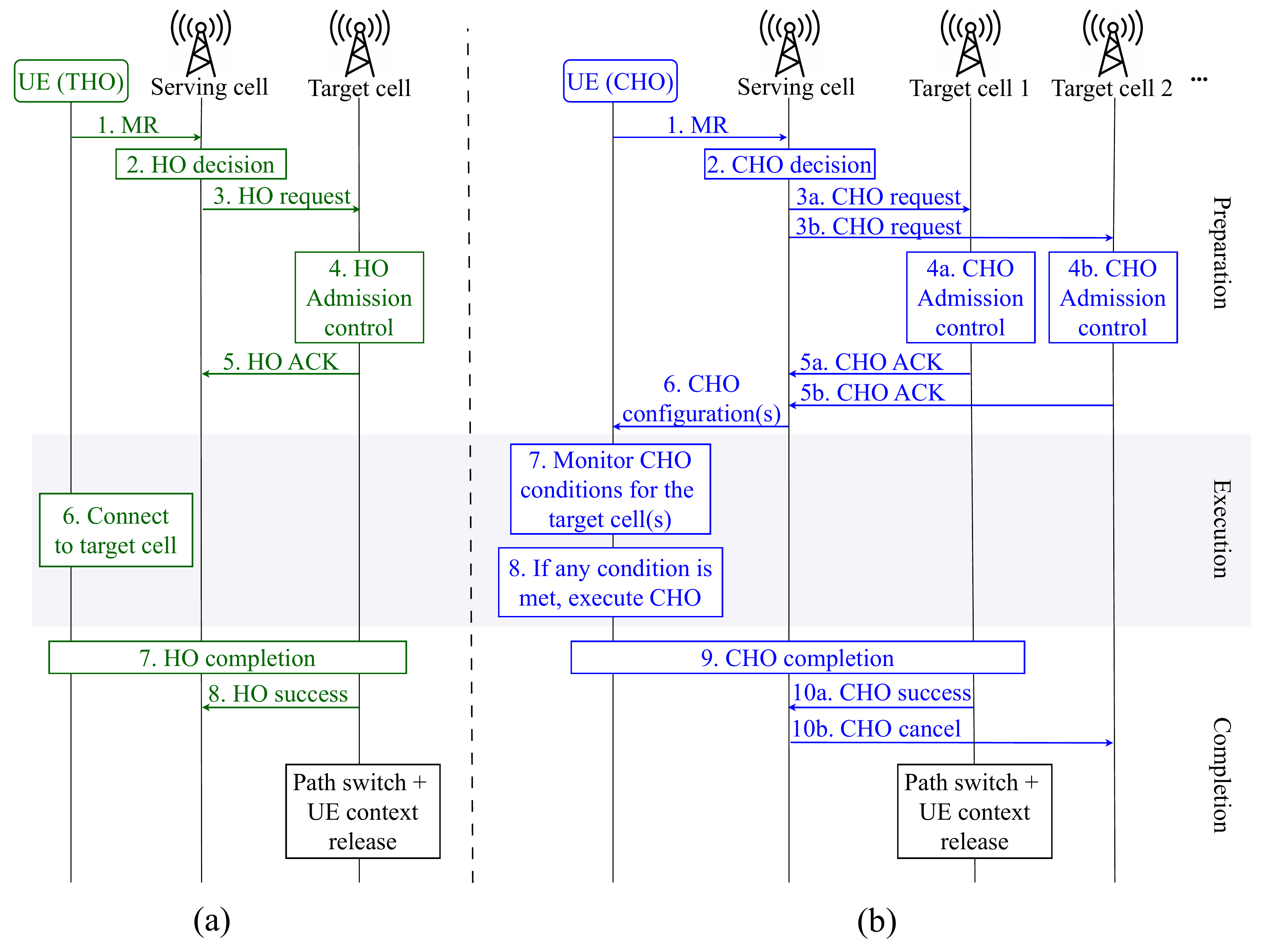}
    \vspace{-3mm}
    \caption{Basic steps/procedures of UEs performing (a) traditional (green) and (b) conditional (blue) HOs, requiring 8 and 10 main steps, respectively.}
    \vspace{-6mm}\label{fig:CHO_basic_steps}
\end{figure}

\minorrev{\noindent\textbf{Traditional HO Basics.} THOs are triggered when the signal quality of the serving cell degrades (e.g., A2 event) and/or another neighboring cell's signal becomes higher (e.g., A3 event) \cite{3gpp_36_331, 3gpp_38_331}. Specifically, the serving cell configures the UE monitoring conditions, such as hysteresis, offset, and time-to-trigger (TTT) parameters, and the UE sends a measurement report (MR) according to these (step 1), as illustrated in Figure \ref{fig:CHO_basic_steps}(a). The \textit{preparation phase} continues with the HO decision and admission (steps 2, 3, 4, 5), which dictate the \textit{one} target cell that the UE should connect to (step 6). This transition (i.e., \textit{execution phase}) is executed \textit{immediately} after the UE receives the command and finishes in the \textit{completion phase}. 

This \textit{reactive} approach, where THOs are initiated \textit{after} signal conditions deteriorate, often leads to increased HO delay and HOF rates, particularly due to the small cell density and rapid signal fading of high frequency bands ($>$ 24 GHz, FR2) \cite{3gpp_consecutive_CHO, kalntis_imc24}. As illustrated in Figure \ref{fig:hof_theory}, it is common for HOFs to occur when a user attempts to send an MR under degrading signal conditions; or even if the MR is successfully sent, the worsening signal conditions may prevent the user from receiving the HO command from the serving cell.

\minorrev{\noindent\textbf{Conditional HO Basics.}} CHOs, designed first as part of 3GPP Rel. 16, address these limitations by \textit{(i)} separating the preparation and execution phases, with their gap potentially being as large as 9--10 s \cite{iqbal22}, and \textit{(ii)} offloading part of the HO decision-making to the user before signal conditions deteriorate \cite{3gpp_38_300}. As can be seen in Figure \ref{fig:CHO_basic_steps}(b), a CHO decision is taken while signal quality is still adequate (step 2), and the source cell can pre-configure (i.e., prepare) multiple \textit{candidate} target cells (steps 3, 4, 5) based on the MR of the UE (step 1). To conclude the preparation phase, the source cell provides the monitoring conditions (step 6), and the execution phase starts: the user applies these conditions continuously (step 7) to evaluate the signal of the source and candidate target cells. 

If any of the predefined conditions are met (step 8), the UE executes the stored HO command (as if it had just been received) without an MR or reply from the serving cell. If more than one cell meets the execution condition, the UE decides which to access; typically, the one with the highest SINR. CHO is finalized in the completion phase, where resources are released from the serving cell and a new path to the target cell is established (steps 9, 10, and beyond). Once resources are reserved for a UE through admission control (step 4), they are not released until a cancellation is sent (step 10b).}

\minorrev{\noindent\textbf{Conditional HO Key Trade-Offs.}}
Determining which cells to prepare is critical in CHOs for balancing resource efficiency and mobility robustness. Ideally, the MNO, whose goal is to economize cell resources, would allow the preparation of the single cell to which the UE will connect in the next slot (i.e., the ``correct'' target cell) and whose signal strength is high, hence maximizing the UE's throughput. However, this cannot always be predicted and multiple candidate cells often need to be prepared, e.g., because the user is located at the edge of a cell and its trajectory is unknown. 
At the same time, a long list of prepared cells does not \textit{necessarily} increase the likelihood of including the ``correct'' target cell, especially if all prepared cells exhibit low signal quality. Conversely, while preparing fewer cells may save resources, it increases the risk of falling back on traditional HOs; thus, larger HO delays and HOF probability (see Sec. \ref{sec:data_collection_analysis}).

Another aspect to consider when optimizing CHOs is the \textit{signaling cost} of preparing cells. For instance, in environments such as FR2, the small cell density and rapid signal fluctuations already result in more frequent HOs, leading to significant signaling overhead. The additional burden of continuously preparing and releasing cells due to constant signal variations further exacerbates this overhead \cite{iqbal22, iqbal23_FCHO_hand_blockage}. For that reason, although initial 3GPP releases specify that UEs should release CHO candidate cells after any successful HO completion to conserve resources~\cite{3gpp_38_300}, subsequent studies suggest that this approach is not always optimal for minimizing HOFs and signaling overhead~\cite{3gpp_consecutive_CHO}.

These trade-offs emphasize that the selection process for cell preparations should focus on finding a balance
between identifying the ``best''  (in terms of signal quality) potential target cells and keeping the number of prepared cells small (i.e., limited reserved resources).

\minorrev{\noindent\textbf{Granularity.}} An essential aspect of any intelligent mechanism is the granularity at which THO/CHO decisions are made (step 2 of Figure \ref{fig:CHO_basic_steps}). Typically, these decisions are taken in hundreds of milliseconds (e.g., 100--640 ms), driven by the TTT parameter; namely, the duration for which the signal of a neighboring cell should be constantly better than that of the serving cell \cite{kelleler-jsac23, 3gpp_38_331, 3gpp_36_331, martikainen18}. This granularity aligns with the decisions of a central controller in the near-RT of O-RAN, which could even handle multiple cells simultaneously \cite{o-ran-andres}. 

\begin{figure}[!t]
    \centering
    {\label{fig:CHO_HOF_causes}\includegraphics[width=0.8\columnwidth]{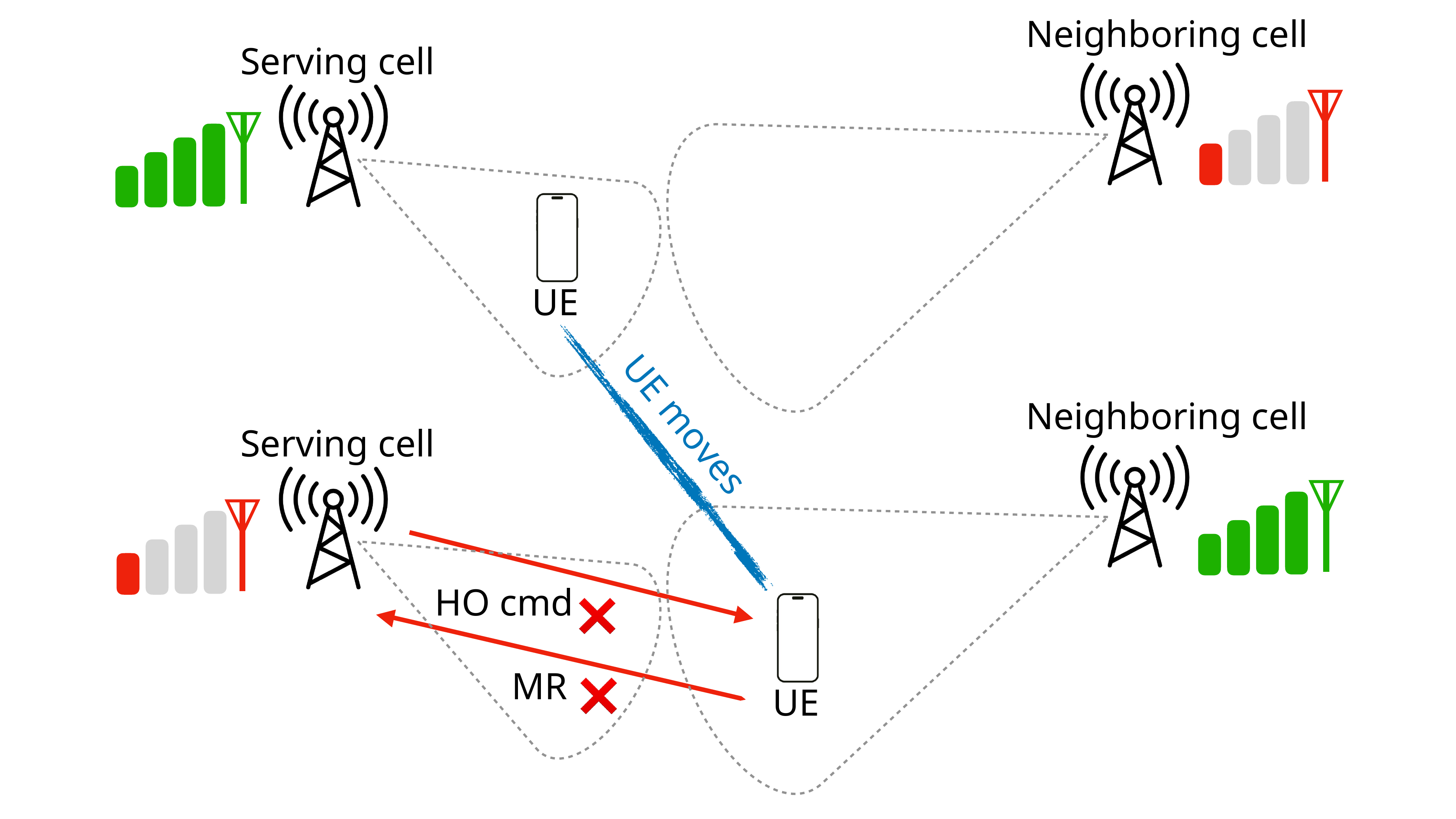}}	
    \vspace{-3mm}
    \caption{A failed traditional HO where the MR (HO) command does not reach the serving cell (UE) as the signal drops.}
    \vspace{-2mm}
    \label{fig:hof_theory}
\end{figure}

\section{Dataset \& Preliminary Analysis}\label{sec:data_collection_analysis}

To motivate the benefits of CHOs, we showcase the problems with traditional HOs using data collected from an MNO in a European country. We analyze measurements for one week, from 24-June-2024 to 30-June-2024, and focus on HO events in the most recent radio access generations: 4G and 5G. At the time this study was conducted, CHOs were unavailable in the analyzed network and the majority of 5G traffic was served through 5G-Non-Standalone (NSA), i.e., relying on the 4G Evolved Packet Core (EPC) instead of the newly deployed 5G core. Our goal is to identify the features (e.g., vendor, location) in the source and target cells, as well as in the UEs, that contribute to increased HOFs and delays. To complement our analysis, we additionally leverage: $(i)$ crowdsourced datasets of signal quality metrics per second (e.g., SINR and RSRP) of 13,590 cells (3,496 cell sites), with precise measurement locations to characterize signal fluctuations; and $(ii)$ census data to examine the correlation of population density with HOs and cell deployments.

We analyze mobility management from the perspective of \textit{both source and target} cells at the country-level.  This overcomes limitations of prior works that relied only on UE-side measurement datasets, such as \cite{vivisecting_2022, wheels_2023}, as well as other large network-side works that focus either only on measurements \cite{kalntis_imc24}, or only on the source cell characteristics, omitting the impact of target cell features apart from the change in radio access technology~\cite{kalntis_infocom25}.

\begin{figure}[t]
    \centering
    \subfigure[]{\label{fig:speed_CINR_perc_change}\includegraphics[scale=0.52]{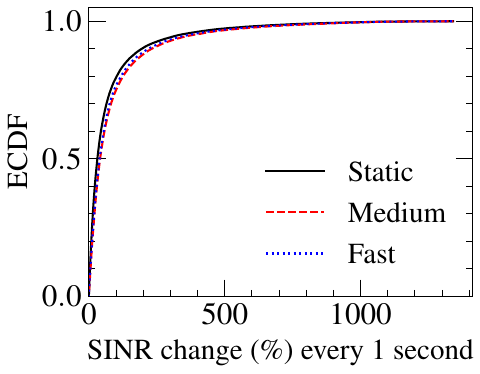}}\ \ \ 
    \subfigure[]{\label{fig:speed_RSRP_perc_change}\includegraphics[scale=0.52]{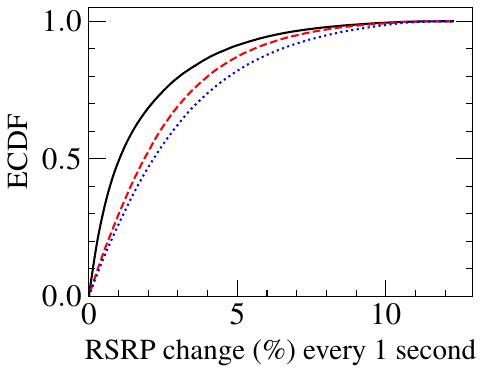}}
    \vspace{-3mm}
    \caption{(a) SINR and (b) RSRP changes per second.
    }
    \vspace{-5mm}
    \label{fig:speed_CINR_RSRP_perc_change}
\end{figure}

\subsection{Signal Quality Metrics and Mobility}

In Figure~\ref{fig:speed_CINR_RSRP_perc_change}, we examine temporal signal variations, namely SINR and RSRP, across various speeds using our crowdsourced datasets, which offer measurements at 1-second intervals. We classify samples into three mobility categories: \textit{static}, \textit{medium} speed (3--10 km/h, likely walking/running), and \textit{high} speed movements ($>$80 km/h, likely vehicular). In general, SINR is sensitive to short-term effects such as fast fading, interference, and noise floor variations, while RSRP reflects a more stable signal strength from the serving cell.

Our findings reveal that SINR exhibits considerable fluctuations, even at the second-level granularity, with more than 20\% of movements experiencing a 100\% or greater change in SINR in one second, regardless of speed. \rev{Given that our passive measurement dataset covers cell deployments in the frequency bands below 3.5 GHz, we highlight that SINR variability does not increase with mobility due to, for instance, greater Doppler spread and more dynamic interference conditions. Doppler effects typically become relevant at higher frequencies, such as mmWave and above \cite{ballesteros_doppler}, and become more noticeable in very high-speed scenarios, such as trains traveling at speeds of up to 500 km/h and Vehicle-to-Vehicle (V2V) communications \cite{ballesteros_doppler, hsp_train}. In contrast, our high-mobility tier mainly captures vehicular UEs moving at speeds above 80 km/h (and rarely exceeding 130 km/h) in an urban environment.} On the other hand, RSRP is more stable, and speed plays a bigger role; e.g., 20\% medium-speed movements have their RSRP changing more than 4\%.

It is worth recalling that in HO procedures, a typical threshold for initiating the HO based on the A3 event is commonly set around 3 dB \cite{kelleler-jsac23}. This threshold provides a useful reference when interpreting Figure \ref{fig:speed_CINR_RSRP_perc_change}, as it highlights how frequently such conditions can be met. 

\textit{\textbf{Key Takeaways}}: The SINR fluctuates by more than 100\% for over 20\% of UEs within short time intervals of 1 s. Such rapid signal variations increase the likelihood of suboptimal HO triggering, and thus, of HO delays and HOFs. The proactive cell preparation approach of CHOs can mitigate the impact of such fluctuations and improve user performance.

\subsection{Spatial Heterogeneity}
From the official census data of the analyzed country, we dissect the 300+ defined districts and 2.5M+ postcodes, and combine them with our collected cell-level datasets from the MNO to understand how cells are placed across different environments, thus affecting the number of cells that can be prepared in CHOs.
Postcodes have been classified as \textit{urban} or \textit{rural}, based on whether they are allocated to an area with a population of more or less than 10k residents, respectively.

Figure~\ref{fig:num_BS_sqkm} displays the district-level cell density (number of cells per sq. km) compared to the population density. Each point corresponds to one of the 300+ districts in the studied country and is color-coded based on the percentage of postcodes classified as urban within the district. As expected, there is a high correlation between these two factors (Pearson correlation of 0.967). In districts where more than 90\% of postcodes are urban, shown mainly in yellow, we observe 10 to 100 cells per sq. km, with areas near the capital city exceeding this range. It is interesting to note that the urban center of the capital city (highest yellow rectangle) attracts a substantial influx of non‑resident users due to being a major administrative and economic hub; consequently, operators deploy more than 650 cells per sq. km to accommodate the increased demand. On the other hand, areas with less than 20\% urban postcodes exhibit as few as 0.12 cells per sq. km, which is sufficient to meet the reduced user demand there. 

\begin{figure}[t]
    \centering
    \includegraphics[width=0.79\columnwidth]{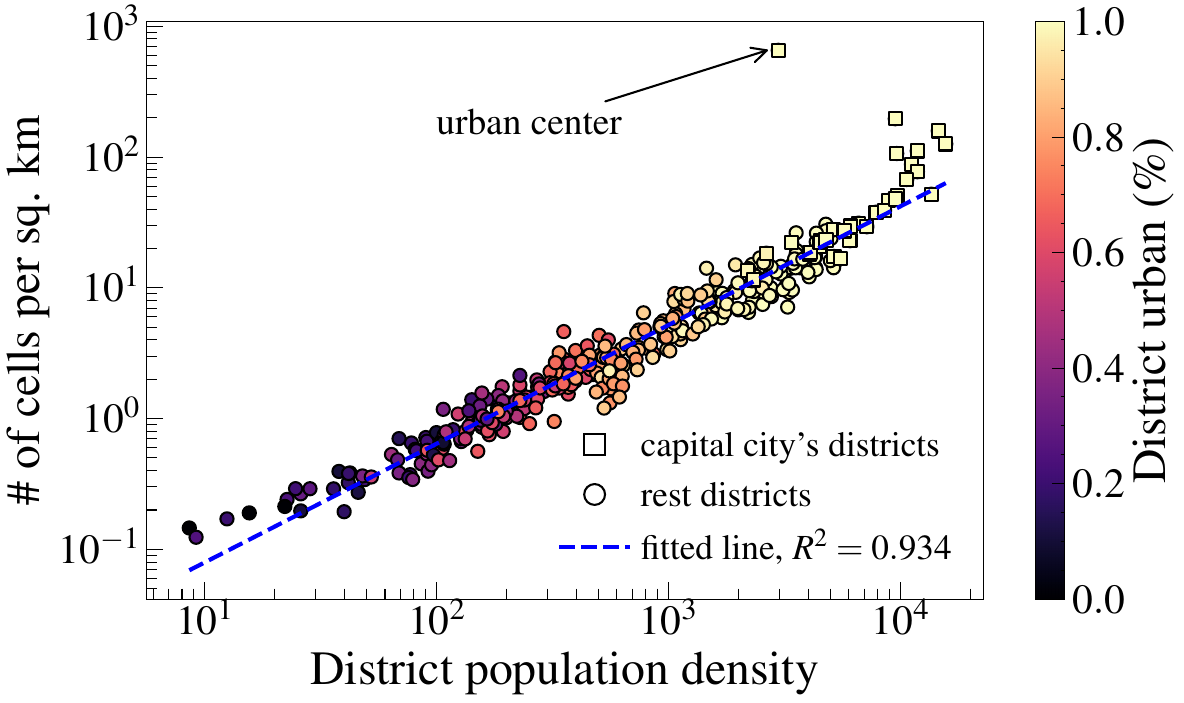}
    \vspace{-1mm}
    \caption{Number of cells per sq. km in the country (district level). Each point is colored according to the ratio of postcodes classified as urban (the others are rural).}
    \vspace{-2mm}
    \label{fig:num_BS_sqkm}
\end{figure}

In Figure~\ref{fig:heatmap_num_BS_05sqkm}, we further delve into a dense urban district to illustrate the spatial heterogeneity of the cell deployment. We partition the district into one sq. km tiles, revealing 2.2k+ total deployed cells. The per-tile cell count spans over two orders of magnitude, ranging from a single cell in sparsely covered areas to more than 160 per sq. km in highly saturated hotspots. Notably, some adjacent tiles exhibit substantial disparities, with cell counts increasing from as few as 3 cells in one tile to as many as 103 in a neighboring one. In CHOs, this means that a UE traversing these areas must dynamically account for a highly variable and often large set of candidate cells.

\textit{\textbf{Key Takeaways}}: Cellular network deployments are highly heterogeneous in terms of density, which is correlated with population. In our dataset, there are 650+ cells per sq. km in the urban center of the capital city, while in remote areas, there are as few as 0.12 per sq. km. In urban areas, the number of candidate cells for CHO can be very large and depends on cell density. There are even significant differences in cell counts between adjacent tiles (1 sq. km). This means that the signaling cost for preparing target cells is not uniform across the network, as it depends on the cell deployment density and properties (distance, frequency, etc). Therefore, the system model in Sec. \ref{sec:system_model} considers cell-dependent signaling costs.

\begin{figure}[t]
    \centering
    \includegraphics[width=0.657\columnwidth]{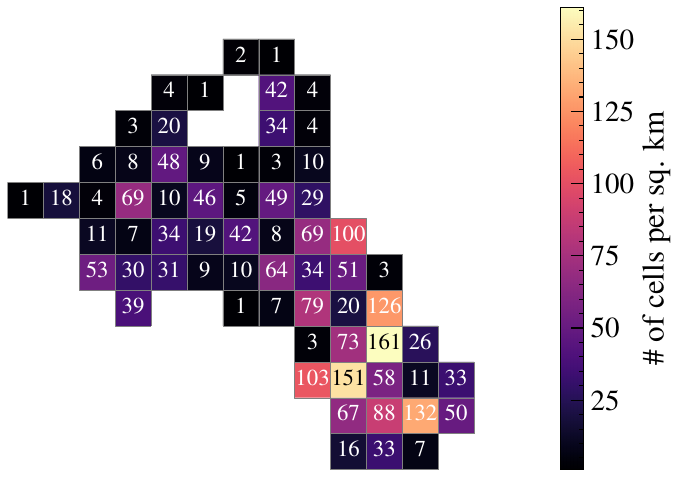}
    \vspace{-1mm}
    \caption{Number of cells/sq.km in an urban area with 2.2k cells.}
    \vspace{-2mm}
    \label{fig:heatmap_num_BS_05sqkm}
\end{figure}

\subsection{HO Failures \& Delays}
We analyze factors that increase HOFs and delays, thereby highlighting the scenarios in which CHOs offer clear benefits. 

Figure~\ref{fig:jsac_HOFrate_HOdelay_hour_boxplot} presents the hourly evolution of successful HO delays (in ms) and HOF rate (in \%) for 1 week in the entire country. Boxplots aggregate cell-level data from the same hour. The two metrics exhibit a diurnal pattern, reaching a max of 0.08\% HOF rate and 62 ms latencies (median values) at 15:00, which is the hour when most HOs occur and the network is most congested~\cite{kalntis_imc24}. Notably, from the pre-dawn median minimum (06:00)
to the max peak at 15:00, there is an increase of 252\% in the HOF rate;
at the same time, HO delay increases by 28\%.
Also, during the night hours 00:00--06:00, even though the HOF rate decreases significantly, the delay of successful HOs increases by approximately 3 ms. This is partially due to the MNO applying dynamic energy-saving policies that switch off cells acting as capacity boosters when they are not needed to meet the demand \cite{soh_onoffcells}. Clearly, existing HO mechanisms offer room for improvement, especially during peak hours, to reduce HOF rates and delays, which can be achieved by leveraging a dynamic and robust THO/CHO algorithm.

We further analyze which features of the source and target cells mainly affect the HOF rate. For that, we focus on source-target cell pairs with at least one HOF registered per day (excluding approximately 3.4\% that are outliers), and study the effect of: $(i)$~vendor, $(ii)$~frequency band, $(iii)$~model name, $(iv)$~transmit power, $(v)$~cell type (macro, micro, etc.), $(vi)$~region (north, south, etc.) and $(vii)$~area type (urban/rural). 

\begin{figure}[t]
    \centering
    \includegraphics[width=\columnwidth]{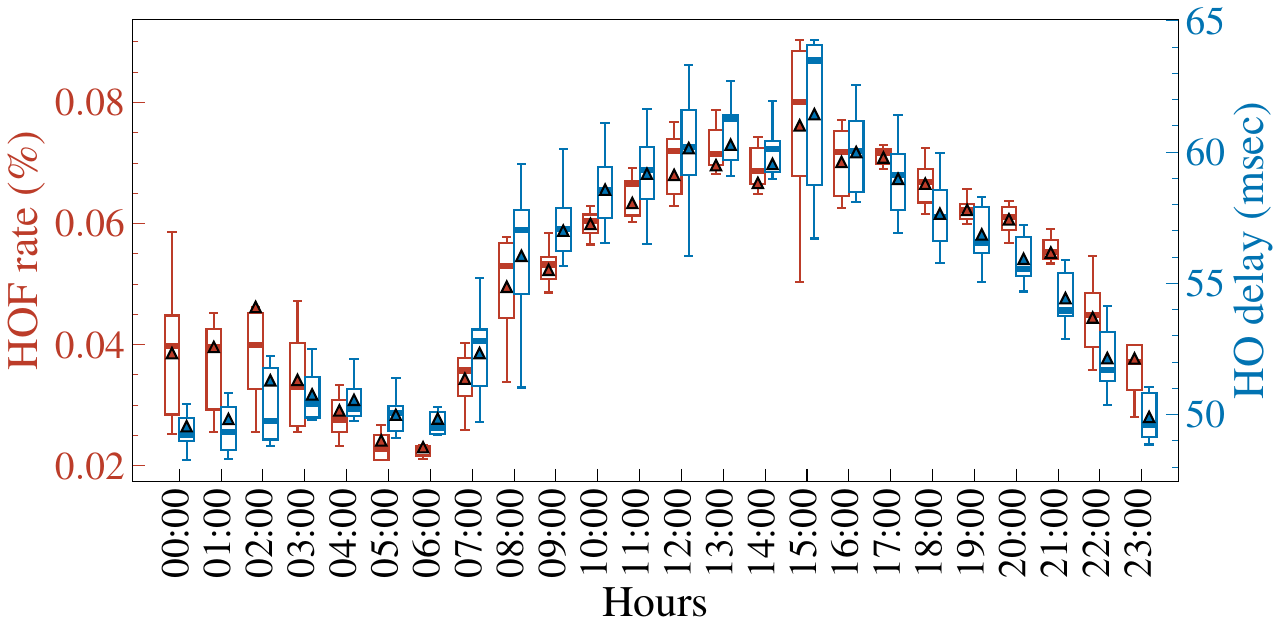}
    \vspace{-6mm}
    \caption{HOF rate (left, red y-axis) and delay of successful HOs (right, blue y-axis) per hour for a 1-week period. Triangles in boxplots depict the mean.
    }
    \vspace{-6mm}
    \label{fig:jsac_HOFrate_HOdelay_hour_boxplot}
\end{figure}

We start this analysis with a non-parametric Kruskal-Wallis test, which confirms that all examined factors affect the HOF rate ($p < 0.05$).\footnote{Normality and homoscedasticity assumptions are violated; hence, analysis of variance (ANOVA) \cite{anova} was not tested.} To determine the importance of each feature, we run multiple \rev{Machine Learning} (ML) models. We discard features with high multicollinearity (VIF $>$ 3) and Pearson correlation ($>$0.5), and include the number of HOs as an input feature to account for spatio-temporal variations (Figures \ref {fig:heatmap_num_BS_05sqkm} and \ref{fig:jsac_HOFrate_HOdelay_hour_boxplot}). The HOF rate is log-transformed, given its heavy-tailed distribution. Among the evaluated models (linear/lasso/ridge regression, k‑nearest neighbours, and random forest), the random‑forest regressor achieved the best performance, reaching an $R^2$ of 0.8 (compared to at most 0.58 for the other models) and RMSE of 0.66 (compared to at most 0.97 for the other models). As a reference, a naive model that always predicts, in the test set, the mean HOF rate from the training set, achieves an RMSE of 1.47. Permutation‑based importance reveals that the number of HOs is by far the strongest predictor, followed by the cell frequency band and the vendor; excluding these last two features drops the performance to $R^2$=0.6, confirming that frequency and vendor still contribute meaningfully.

Figure~\ref{fig:inter_intra_freq_ven} shows the effect of these latter features on HOs and HOFs. The x-axis indicates whether the HO occurred in the same (intra) or different (inter) frequency and antenna vendor, as well as if the frequencies of the source and target cells lie in the low ($\leq$ 2GHz) or mid-high ($>$ 2GHz) spectrum. The y-axis shows the HOF rate in ascending order, the delay of successful HOs and the percentage of total HO and HOF count within each category. We observe that the mean HOF rate in inter-frequency HOs, mainly when the source and/or target cells operate in low frequencies, is $\approx 0.26\%$, i.e., $\times$2.4 more than the 0.11\% of the intra-frequency ones. However, the HO delay for the former is no more than 1.5 ms higher. Intra-frequency HOs in the same low-spectrum frequency handle the majority of HOs (i.e., 42\%), but their contribution to failures closely matches this number (38.4\% of all HOFs). Lastly, note that HO delay is mainly affected by the change in vendor: even though 88\% of HOs are intra-vendor with a delay of 59--60 ms, inter-vendor HOs exhibit a delay that is 5--6 ms higher. These inter-vendor HOs are prevalent at regional borders (e.g., west-east border), given that a single vendor predominantly serves each region in the studied country. This analysis reveals that accounting for cell features is crucial for HOs.

\begin{figure}[t]
    \centering
    \includegraphics[width=\columnwidth]{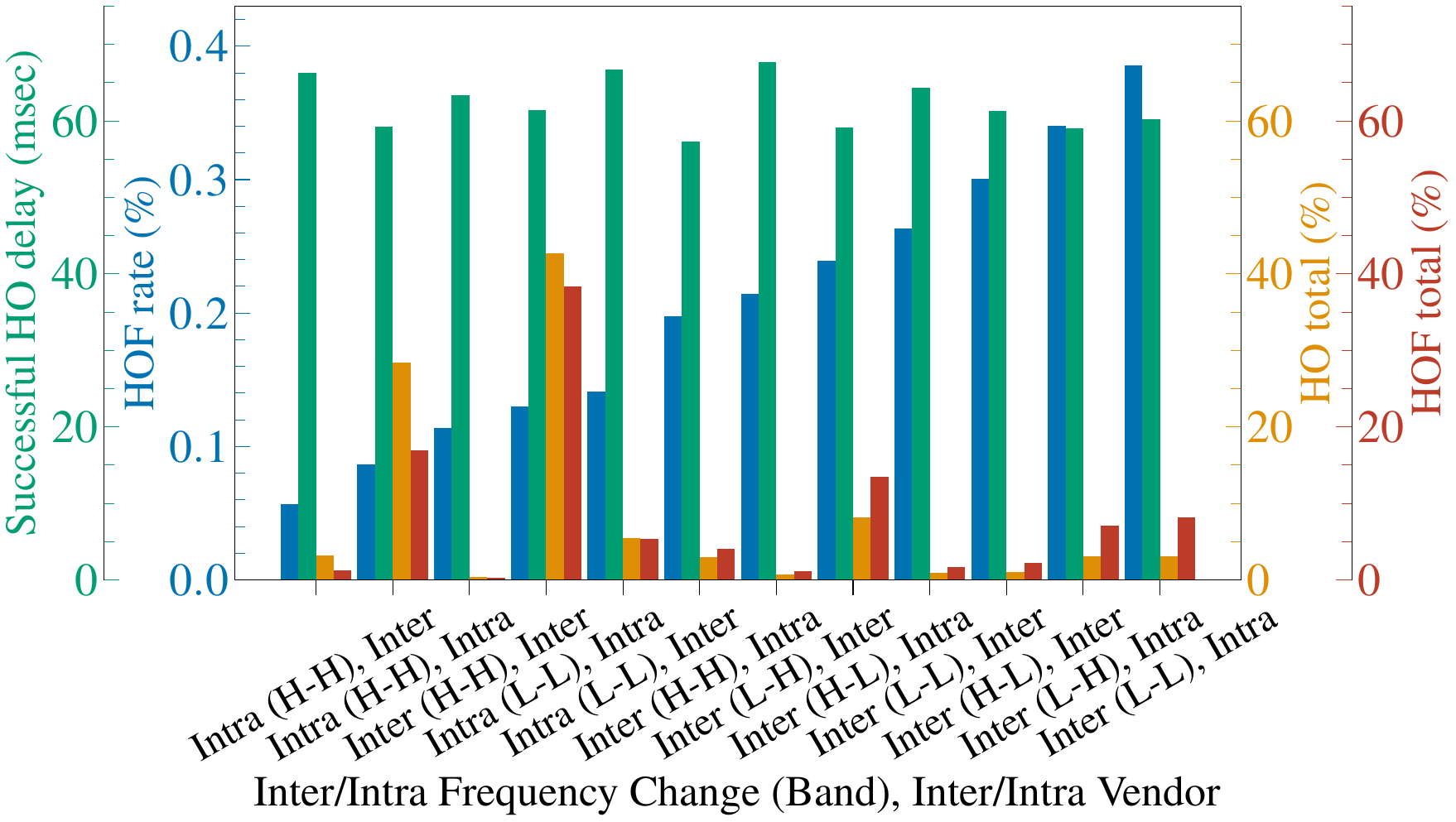}
    \vspace{-5mm}
    \caption{(left to right) Delay of successful HOs (ms), HOF rate (\%), percentage of total HOs, and HOFs based on the source-target frequency bands and vendors. 
    }
    \vspace{-6mm}
    \label{fig:inter_intra_freq_ven}
\end{figure}

\textit{\textbf{Key Takeaways}}: HO failures and delays are strongly influenced by: $(i)$~time of day, exhibiting diurnal patterns that peak at 15:00 with a 0.08\% HOF rate and a 62 ms HO delay (median values); $(ii)$ frequency changes between the source and target cells, with inter-frequency HOs having $\times$2.4 higher HOFs than intra-frequency HOs, although with a slight difference in HO delay, and $(iii)$~vendor transitions, where inter-vendor HOs exhibit 5--6 ms higher delay than intra-vendor. This evidences the potential benefit of accounting for the previous cell-level characteristics for HO optimization, as we consider in the design of \rev{our algorithm}.

\section{System Model \& Problem Formulation}\label{sec:system_model}

\subsection{Modeling Components}
We consider a heterogeneous cellular network comprising a set $\mathcal J$ of $J$ cells that serve a set $\mathcal I$ of $I$ UEs. We assume that a central controller takes decisions for multiple UEs/cells \cite{3gpp_HO_vestel} in a time-slotted manner for a set $\mathcal T$ of $T$ slots \cite{andrews-association, kalntis_tcom24} with each slot lasting hundreds of milliseconds (see Sec. \ref{sec:background}), as enabled by near-RT RIC in O-RAN\cite{o-ran-andres}.\footnote{\rev{The modeling and optimization framework of this section can be extended and implemented by other RAN systems, not solely O-RAN, as long as there is support for centralized decision-making that is fed by signal measurements.}} 

These decisions concern both THOs and CHOs, and are taken on the \textit{same time-scale}. The set $\mathcal I$ is partitioned into the set $ \mathcal I_{\text{THO}}$ of $ I_{\text{THO}}$ UEs and $\mathcal I_{\text{CHO}}$ of $ I_{\text{CHO}}$ UEs, who follow THOs (\textit{THO-enabled}) and CHOs (\textit{CHO-enabled}), respectively, where $\mathcal I_{\text{THO}}\cap \mathcal I_{\text{CHO}}=\emptyset$ and $\mathcal I_{\text{THO}} \cup \mathcal I_{\text{CHO}}\!=\!\mathcal I$. This distinction can arise due to service-level agreements (SLAs), network slicing policies \cite{slicing_Jiongyu}, or vertical-specific requirements, in which different classes of UEs are managed separately; e.g., enhanced-reliability or low-latency UEs may be assigned to CHO. Alternatively, an MNO may apply a rule-based classification, informed by prior data (e.g., historical patterns), to assign UEs to either \textit{HO type}.

For the THO-enabled UEs, we introduce the decisions:
\[\bm x_t\!=\!\big(x_{ij}{(t)} \!\in\! \{0,1\}, i\in\mathcal I_{\text{THO}},\ j\in\mathcal J\big),\]
\noindent where $x_{ij}{(t)}\!\in\!\{0,1\}$ defines the \textit{explicit one-cell association decision} with $x_{ij}{(t)}\!=\!1$ if user $i$ is assigned from the controller to cell $j$ in slot $t$. For the CHO-enabled UEs, we introduce: 
\[\bm y_t\!=\!\big(y_{ij}{(t)} \!\in\! \{0,1\}, i\in\mathcal I_{\text{CHO}},\ j\in\mathcal J\big),\] 
\noindent where $y_{ij}{(t)}\!\in\!\{0,1\}$ is the \textit{preparation decision} (\textit{implicit multi-cell association decision}): $y_{ij}{(t)}\!=\!1$ means that cell $j$ is prepared from the controller in slot $t$ for user $i$ and the user can decide whether to connect to this cell (as explained later). These decisions are drawn from the sets: 
 \begin{align*}
	&\mathcal X=\Big\{ \bm x \in \{0,1\}^{I_{\text{THO}}\cdot J}  \  \Big |   \sum_{j\in \mathcal J}{x}_{ij} = 1,  i\in \mathcal I_{\text{THO}} \Big\},
\end{align*}
\noindent where naturally each $i \in \mathcal I_{\text{THO}}$ is assigned to one cell, and:
\begin{align*}
	&\mathcal Y=\Big\{ \bm y \in \{0,1\}^{I_{\text{CHO}}\cdot J}  \  \Big |   1 \leq \sum_{j\in \mathcal J} {y}_{ij} \leq b_{i},  i\in \mathcal I_{\text{CHO}} \Big\},
\end{align*}
\noindent where no more than $b_{i}$ cells can be prepared for $i \in \mathcal I_{\text{CHO}}$.

The key metric is the SINR for the signal delivered by cell $j \in \mathcal{J}$ to any user $i \in \mathcal{I}$ in slot $t \in \mathcal{T}$:
\begin{align}
	s_{ij}{(t)}=\frac{q_j\phi_{ij}{(t)}}{	W_j\sigma^2 + \sum_{k\in \mathcal{B}_j} q_k\phi_{ik}{(t)}},
\end{align}
where $q_j$ is the transmit power of cell $j$, $\mathcal B_j$ the set of cells that operate in the same frequency as $j$, $\phi_{ij}{(t)}$ the channel gain (including pathloss, shadowing, and antenna gains), $W_j$ is the bandwidth, and $\sigma^2$ the power spectral density. In line with previous works \cite{andrews-association,kelleler-jsac23, andrews-globecom21, choi-TWC15}, $s_{ij}{(t)}$ is the average SINR in $t$, since UEs report multiple values during each slot.

We distinguish two cases for the \textit{maximum throughput} a UE can receive from a cell, based on whether it is THO- or CHO-enabled:
For the former \cite{kalntis_infocom25, andrews-association, andrews-globecom21}, this throughput is expressed as:
\begin{align}
    c_{ij}(t)=W_j\log\big(1+s_{ij}(t)\big),  i \in \mathcal I_{\text{THO}} \label{eq:rate_THO},
\end{align}
\noindent while for the latter, we introduce:
\begin{align}
    c_{ij}'(t) \!=\!
    \begin{cases}
    \! W_j \log \! \big(1 \!+\! s_{ij}(t)\big), & \!\!\!\! \text{if } j \!=\! \underset{k:\, y_{ik}(t) = 1}{\arg\max}  \ s_{ik}(t) \\
    \! 0, & \!\!\!\! \text{otherwise} \label{eq:rate_CHO}
\end{cases}
\end{align}
\noindent where $i \in \mathcal I_{\text{CHO}}$. The throughput of the CHO-enabled UEs in eq. \eqref{eq:rate_CHO} depends on their own decision, which, in turn, depends on the measured signals and the list of prepared cells it receives from the network. We consider here the standard cell-selection rule, in which the CHO-enabled UE connects to the best-SINR cell among those prepared for it.
\footnote{To avoid often CHOs and ping-pong effects \cite{pp_ho_2023} when finding the highest-SINR prepared cell, it is possible to subtract a cell-specific offset, as happens in the A3 event \cite{3gpp_36_331, 3gpp_38_331, martikainen18}.} 

Finally, we define the load of each cell $j \in \mathcal J$:
\begin{align}
\ell_{j}{(t)}=\sum_{i \in \mathcal I_{\text{THO}}} x_{ij}{(t)}+\sum_{i \in \mathcal I_{\text{CHO}}} y_{ij}{(t)} \color{black}\leq C_j\label{eq:load},
\end{align}
\noindent as the total number of associated users (both THO- and CHO-enabled), \rev{which cannot exceed the upper bound $C_j$, in lieu of the actual spectrum budget capacity.}
Note that for the CHO-enabled users, the cell resources are reserved independently of whether the UEs will actually select the cell, and hence CHOs are likely to induce unnecessary resource waste. This is a key CHO issue that our model tackles.

To streamline the presentation, we define $\bm z_t = (\bm x_t, \bm y_t)$ and the associated set:
\begin{align*}
    \mathcal Z = \Big\{\bm z = (\bm x, \bm y) \ | \ \color{black} \ell_j \leq C_j, \ \forall j \in \mathcal J \color{black}, \ \bm x \in \mathcal X, \ \bm y \in \mathcal Y \Big\}.
\end{align*}

\subsection{Problem Statement}
In our framework, the network controller makes \textit{joint} decisions for the THO- and CHO-enabled users, whose performance \rev{is coupled through the load of each cell}, as can be seen in eq. \eqref{eq:load}.
With this in mind, we introduce the \textit{utility} function the controller needs to maximize:
\begin{align}
\! \!g_t (\bm z_t) \!=\!\! \sum_{j \in \mathcal J} \!\! \Bigg(\! \underbrace{ \sum_{i \in \mathcal I_{\text{THO}}} \!\!\!\! x_{ij}{(t)}\log\frac{c_{ij}{(t)}}{\ell_{j}{(t)}}}_{\substack{\text{explicitly assigned UEs} \\ \text{(can do THO)}}} \!+\!\!\! \underbrace{\sum_{i \in \mathcal I_{\text{CHO}}}\!\!\! y_{ij}{(t)}\log\frac{c_{ij}^\prime(t)}{\ell_{j}{(t)}}}_{\substack{\text{implicitly assigned UEs} \\ \text{(can do CHO)}}} \!\!\Bigg)\!\! \!\label{eq:utility}
\end{align}
\noindent where the two terms define the throughput of the THO-enabled and CHO-enabled UEs. \footnote{\rev{We consider that there is no idle cell, and in case of zero throughput, $c_{ij}(t)+1$ or $c'_{ij}(t)+1$ can be used inside the logarithm.}}
We assume the cell resources are allocated fairly across the users via, e.g., a round robin or a proportional-fair scheduler \cite{tse-scheduler}. The logarithmic transformation balances throughput across users to achieve fairness \cite{andrews-association}; however, other mappings or schedulers can be used as easily.

Using the association decision $x_{ij}{(t)}$ for user $i \in \mathcal I_{\text{THO}}$, the THO is modeled with the change $x_{ij}{(t)} \neq x_{ij}{(t-1)}$. This way, the total number of THOs can be captured with the norm $ \| \bm x_t - \bm x_{t-1}\|$, as in \cite{andrews-association}. Nevertheless, we are interested in the \textit{THO delay (switching) cost} and not merely in the number of THOs. Following our findings in Sec. \ref{sec:data_collection_analysis}, we propose using the weighted norm $\| \bm x_t - \bm x_{t-1}\|_{A_t}$, where $\bm A_t\!=\!\text{diag}( a_{n}{(t)}\!>\!0)$ is a positive definite matrix with its diagonal weights $a_{n}{(t)} \in [0,1]$, $n\!=\color{black} 1, \ldots, \color{black} I_{\text{THO}} \!\cdot J$, \rev{penalizing differently the THOs for each UE-cell pair}; a penalty that may even change over time.\footnote{\rev{E.g., 3G UEs have higher cost for changing cells as they are more prone to HO failures \cite{kalntis_infocom25}. In this way, the model accounts for the UE's HO capabilities.}} 

On the other hand, a CHO is executed from the UE based on the prepared cells by the controller captured through $\bm y_t$: the user's decision in the CHO case is to be assigned to the highest-SINR prepared cell. Clearly, frequent cell preparations and releases may lead to increased signaling \cite{stanczak22, martikainen18}, especially in dense cell deployments (can be up to 161 cells, see Sec. \ref{sec:data_collection_analysis}). Hence, we introduce the \textit{CHO signaling (switching) cost} $\| \bm y_t - \bm y_{t-1}\|_{B_t}$. In this case, $\bm B_t\!=\!\text{diag}( b_{n'}{(t)}\!>\!0, n'\!=\color{black} 1, \ldots, \color{black}I_{\text{CHO}} \!\cdot J)$ is a diagonal positive definite matrix with weights $b_{n'}{(t)}\!\in\! [0,1]$, to penalize differently the preparations and releases of cells for each UE-cell pair\footnote{E.g., the signaling cost might be more detrimental during peak-hours.} in slot $t$. Given that THOs are more prone to failures and delays than CHOs, it holds that $a_{n}{(t)} > b_{n'}{(t)}, \ \forall t \in \mathcal T$.

These THO and CHO switching costs can be combined as: 
\begin{align}
    \| \bm z_t - \bm z_{t-1}\|_{C_t}^2 = & \| \bm x_t - \bm x_{t-1}\|_{A_t}^2 + \| \bm y_t - \bm y_{t-1}\|_{B_t}^2,
\end{align}
\noindent where $\bm C_t$ is the full block diagonal matrix:
\begin{align}
    \bm C_t =
    \begin{bmatrix}
    \bm A_t & 0 \\
    0 & \bm B_t
    \end{bmatrix} \tag{\rev{\theequation}}\addtocounter{equation}{+1},
\end{align}

\noindent and the induced norm and its dual \cite{beck-book}, are:
\begin{align}
    \|\bm{x}_t\|_{A_t}^2\!=\!\!\! \sum_{n=1}^{I_{\text{THO}}\cdot J}\!\! a_{n}{(t)} x_n(t)^2,  \|\bm{x}_t\|_{A_t*}^2\!=\!\!\!\sum_{n=1}^{I_{\text{THO}}\cdot J}\!\!  x_n(t)^2/a_{n}{(t)}, \tag{\rev{\theequation}}\addtocounter{equation}{+1}
\end{align}

\noindent and similarly for $\bm y_t$.

Putting the above together,  the problem that the network controller wishes to address is the following:
\begin{align}
    \mathbb{P}_1: & \max_{\{\bm{z}_t\}_{t}} \sum_{t=1}^T \Big(g_t(\bm z_t) - \| \bm z_t-\bm  z_{t-1}\|_{C_t} \Big) \notag \\ & 
    \textrm{s.t.}
    \quad \bm z_t \in \mathcal Z, \ \forall t \in \mathcal T \notag,
\end{align}
\noindent where $g_t(\bm z_t) -\| \bm z_t-\bm  z_{t-1}\|_{C_t}$ is the \textit{objective} function. 
Next, we explain in detail why $\mathbb{P}_1$ cannot be solved offline.

\subsection{Optimization Challenges}
Solving $\mathbb P_1$ offline is impossible since, at $t=1$, the controller lacks knowledge of the future SINRs of UEs as it has no access to, or influence over, their mobility. Thus, $\mathbb P_1$ must be tackled in an online fashion. In fact, in increasingly-many scenarios, the problem parameters (SINRs, loads, switching costs, etc.) are unknown even at the beginning of each slot; they are only revealed \textit{after} the association and preparation decisions are made. For example, while it may seem the controller could observe the current SINRs before making a decision, in practice, there is a non-negligible delay between the time a UE measures and reports these values and the time the network processes this information. And for fast-moving users or highly-volatile environments, this delay will yield outdated information. This means that we need an online \emph{learning} approach, where both the explicit (for THO) and implicit (for CHO) association decisions are made based on historical UE, network, and environment data, without presuming knowledge of their future values. 

What is more, the signaling and switching overheads depend on successive decision changes: a user that remains in acell or a cell that remains prepared does not incur additional costs. This introduces a memory effect in the optimization, as past decisions influence current decisions. We also stress that the $A_t$-norm and $B_t$-norm imply the costs change over time; however, this cost does not depend on previous decisions. Finally, $\mathbb P_1$ is further compounded by the discreteness of the variables $\bm x_t$ and $\bm y_t$ which prevents the application of off-the-shelf online convex optimization techniques.

Despite these challenges, the goal is to design an online learning algorithm that determines UE-cell associations and preparations and is oblivious to all time-varying and unknown \textit{system conditions}, including future SINR, load, switching costs, and HO delays for each UE-cell pair.

\section{Algorithm Design}\label{sec:algorithm}

To leverage the OCO/SOL toolbox for tackling the problem at hand, the following two main issues need to be addressed: \textit{(i)} the discreteness of decision variables, and \textit{(ii)} the (convex) max operator in the definition of the throughput concerning the CHO-enabled UEs, as can be seen in eqs. \eqref{eq:rate_CHO} and \eqref{eq:utility}. In a nutshell, our solution strategy involves relaxing the discreteness of the decision variables, transforming the resulting continuous problem into a concave one, solving the new concave problem, and then mapping the obtained continuous solution back to the discrete domain via careful rounding.

\subsection{Relaxation \& Transformation}

First, we define the convex hulls $\mathcal X^{\text{c}}\!=\!\text{co}(\mathcal X)$, $\mathcal Y^{\text{c}}\!=\!\text{co}(\mathcal Y)$, $\mathcal Z^{\text{c}}\!=\!\text{co}(\mathcal Z)$ that relax the integrality of the decision variables:
\begin{align*}
	&\mathcal X^{\text{c}}=\Big\{ \bm x \in[0,1]^{I_{\text{THO}}\cdot J}  \  \Big |   \sum_{j\in \mathcal J}{x}_{ij} = 1,  i\in \mathcal I_{\text{THO}} \Big\}, \\
    & \mathcal Y^{\text{c}}=\Big\{ \bm y \in [0,1]^{I_{\text{CHO}}\cdot J}  \,\,\,  \Big |   1 \leq \sum_{j\in \mathcal J} {y}_{ij} \leq b_{i},  i\in \mathcal I_{\text{CHO}} \Big\}, \\
    & \mathcal Z^{\text{c}}=\Big\{ \bm z = (\bm x, \bm y) \ | \ \color{black} \ell_j \leq C_j, \ \forall j \in \mathcal J, \ \color{black} \bm x \in \mathcal X^{\text{c}}, \ \bm y \in \mathcal Y^{\text{c}} \Big\}.
\end{align*}

\noindent Then, using the properties of logarithm, eq. \eqref{eq:utility} becomes:
    \begin{align}
    g_t(\bm z_t) &= \underbrace{\sum_{j \in \mathcal J} \sum_{i \in \mathcal I_{\text{THO}}} \!\!\!\! x_{ij}(t)\log{c_{ij}(t)}}_{g^{\text{THO}}_t(\bm x_t)} \!+\!
    \underbrace{\sum_{j \in \mathcal J}\sum_{i \in \mathcal I_{\text{CHO}}} \!\!\!\! y_{ij}(t)\log{c_{ij}^{\prime}(t)}}_{g^{\text{CHO}}_t(\bm y_t)} \notag
    \\ & - 
    \underbrace{\sum_{j \in \mathcal J} \ell_{j}(t) \log\ell_{j}(t)}_{g^{\text{load}}_t(\bm x_t, \ \bm y_t)} \tag{\rev{\theequation}}\addtocounter{equation}{+1}\label{eq:utility_open},
    \end{align}
    \noindent where the throughput for the THO-enabled UEs, $g^{\text{THO}}_t(.)$, is linear in $\bm x_t \!\in\! \mathcal X^\text{c}$, and the load penalty, $-g^{\text{load}}_t(.)$, describes the entropy; thus, both are \textit{concave}. On the other hand, the throughput for the CHO-enabled UEs, $g^{\text{CHO}}_t(.)$, is not concave even when $\bm y_t \!\in\! \mathcal Y^\text{c}$, due to the influence of preparation decisions $\bm y_t$ on $c_{ij}^{\prime}(t)$ \rev{through the max operator}, see eq. \eqref{eq:rate_CHO}. To illustrate this more clearly, we can equivalently write this part as follows:
    \begin{align}
        g^{\text{CHO}}_t(\bm y_t) \triangleq \sum_{i\in\mathcal I_{\mathrm{CHO}}} \color{black}\max_{j:\,y_{ij}(t)=1}\,\log c_{ij}(t),\label{eq:max_function}
    \end{align}
\noindent where this transformation achieves the same/desired behavior: the CHO-enabled UE obtains throughput from the cell with the best SINR \textit{among the prepared ones}.
Given that the max operator of eq. \eqref{eq:max_function} is nonsmooth and convex, \rev{we introduce a linear masking mechanism that leads to a \textit{concave} surrogate, inspired by the standard LogSumExp function \cite{boyd} used in deep neural networks (DNNs):}

\begin{align}
    \tilde{g}^{\text{CHO}}_t(\bm y_t) \triangleq \sum_{i\in\mathcal I_{\mathrm{CHO}}}\frac{1}{\alpha}\log\!\color{black}\Bigg(\sum_{j \in \mathcal J} y_{ij}(t)\,c_{ij}(t)^{\alpha}\Bigg),\label{eq:max_function_approx}
\end{align}

\noindent where $\alpha > 0$ controls the tightness of the approximation.

\begin{lemma}[\color{black}{Approximation of CHO Throughput}]\label{lemma_relaxation}

    For $\bm y_t \in \mathcal Y^{\text{c}}$ and $t \in \mathcal T$, 
    \[
    \lim_{\alpha \to \infty} \tilde{g}^{\text{CHO}}_t(\bm y_t) = g^{\text{CHO}}_t(\bm y_t).
    \]
    
\end{lemma}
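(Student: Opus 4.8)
The plan is to establish the convergence pointwise for each fixed $\bm y_t \in \mathcal Y^{\text{c}}$ by a squeeze (sandwich) argument, exploiting the fact that \eqref{eq:max_function_approx} is exactly a \emph{weighted} LogSumExp of the quantities $\log c_{ij}(t)$ with masking weights $y_{ij}(t)$. Since both $g^{\text{CHO}}_t$ and $\tilde{g}^{\text{CHO}}_t$ decompose as finite sums over $i \in \mathcal I_{\text{CHO}}$, and the limit of a finite sum is the sum of the limits, it suffices to treat a single user $i$ and show that $\frac{1}{\alpha}\log\big(\sum_{j} y_{ij}(t)\, c_{ij}(t)^{\alpha}\big) \to \max_{j:\, y_{ij}(t) > 0} \log c_{ij}(t)$ as $\alpha \to \infty$.

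First I would fix a user $i$ and a slot $t$, set $M_i \triangleq \max_{j:\, y_{ij}(t) > 0} c_{ij}(t)$, and let $j^\star$ be a cell attaining this maximum. The lower feasibility constraint $\sum_{j} y_{ij}(t) \ge 1$ guarantees that the support $\{j : y_{ij}(t) > 0\}$ is nonempty, so $M_i$ and $j^\star$ are well-defined and $y_{ij^\star}(t) > 0$. Only cells in the support contribute to the inner sum, and for all of them $0 < c_{ij}(t) \le M_i$; positivity of the rates (guaranteed, if needed, by the $+1$ convention noted after \eqref{eq:utility}) is what makes the logarithm well-defined here.

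The core step is the two-sided bound. For the lower bound I would retain only the $j^\star$ term, $\sum_{j} y_{ij}(t)\, c_{ij}(t)^{\alpha} \ge y_{ij^\star}(t)\, M_i^{\alpha}$, which gives $\frac{1}{\alpha}\log(\cdot) \ge \log M_i + \frac{1}{\alpha}\log y_{ij^\star}(t)$. For the upper bound I would factor out $M_i^{\alpha}$ and combine $c_{ij}(t) \le M_i$ with the upper constraint $\sum_{j} y_{ij}(t) \le b_i$, yielding $\sum_{j} y_{ij}(t)\, c_{ij}(t)^{\alpha} \le M_i^{\alpha} \sum_{j} y_{ij}(t) \le b_i\, M_i^{\alpha}$, hence $\frac{1}{\alpha}\log(\cdot) \le \log M_i + \frac{1}{\alpha}\log b_i$. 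Since $y_{ij^\star}(t)$ and $b_i$ are positive constants independent of $\alpha$, both correction terms are $O(1/\alpha)$ and vanish, so the squeeze theorem delivers the per-user limit $\log M_i$. Summing over the finite set $\mathcal I_{\text{CHO}}$ then closes the argument.

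The argument is essentially routine; the only point requiring care is that the masking weights $y_{ij}(t)$ are \emph{fractional} rather than binary on the relaxed domain $\mathcal Y^{\text{c}}$, so I must confirm that the weight $y_{ij^\star}(t)$ in the lower bound is strictly positive (which holds by the definition of $M_i$ as a maximum over the support) and that the weight sum in the upper bound stays bounded (by $b_i$). This is precisely what prevents the two vanishing correction terms from blowing up as $\alpha \to \infty$. Finally, on binary vertices $\bm y_t \in \mathcal Y$ the support-based maximum coincides with the original $\max_{j:\, y_{ij}(t)=1}$ of \eqref{eq:max_function}, so the limit recovers $g^{\text{CHO}}_t$ exactly, as claimed.
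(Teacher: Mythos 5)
Your proof is correct, and it takes a somewhat different (and more careful) route than the paper's. The paper's proof sets $d^*_i(t)=\max_{j:\,y_{ij}(t)=1}\log c_{ij}(t)$, factors $e^{\alpha d^*_i(t)}$ out of the weighted exponential sum, and concludes because the residual $\tfrac{1}{\alpha}\log\bigl(1+\sum_{j}y_{ij}(t)e^{\alpha(d_{ij}(t)-d^*_i(t))}\bigr)$ vanishes, all surviving exponents being negative. Your squeeze argument reaches the same $O(1/\alpha)$ correction from two one-sided bounds instead: keeping only the maximizer term gives $\log M_i+\tfrac{1}{\alpha}\log y_{ij^\star}(t)$ from below, and bounding every rate by $M_i$ together with $\sum_j y_{ij}(t)\le b_i$ gives $\log M_i+\tfrac{1}{\alpha}\log b_i$ from above. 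Both arguments exploit domination by the largest term, but yours buys two things. First, it is valid verbatim on the relaxed domain $\mathcal Y^{\text{c}}$ where the lemma is actually stated: you take the maximum over the support $\{j:\,y_{ij}(t)>0\}$, nonempty by the constraint $\sum_j y_{ij}(t)\ge 1$, whereas the paper's proof is phrased with $\max_{j:\,y_{ij}(t)=1}$, a set that can be empty for fractional $\bm y_t$, so its algebra implicitly lives on the binary vertices $\mathcal Y$. Second, your bounds are insensitive to ties: if several prepared cells attain the maximum, the paper's ``$1+\cdots$'' factorization is not exact (the leading coefficient becomes the total weight at the maximum rather than $1$), although its conclusion survives since $\tfrac{1}{\alpha}\log(\text{bounded})\to 0$. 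The one caveat, which you flag yourself, is that for fractional $\bm y_t$ the quantity you converge to is the support-based maximum rather than the literal right-hand side of eq.~\eqref{eq:max_function}; this mismatch between statement and definition is inherited from the paper itself, and on binary points the two coincide, so your proof establishes exactly what the lemma is used for.
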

\begin{proof}

\noindent \rev{Let $d^*_{i}(t) = \max_{j:\,y_{ij}(t)=1}\,\log c_{ij}(t)$ and $d_{ij}(t) = \log c_{ij}(t)$. Then, eq. \eqref{eq:max_function_approx} becomes:
\begin{align}
     &\tilde{g}^{\text{CHO}}_t(\bm y_t) =
     \sum_{i\in\mathcal I_{\mathrm{CHO}}}\!\!\frac{1}{\alpha}\log\!\Bigg(\sum_{j \in \mathcal J} y_{ij}(t) e^{\alpha d_{ij}(t)} \Bigg) = \notag \\
     & \sum_{i\in\mathcal I_{\mathrm{CHO}}}\!\!\frac{1}{\alpha} \log\!\, \!\Bigg(e^{\alpha d^*_{i}(t)} \bigg( 1 + \!\!\!\!\!\!\!\sum_{j:\, d_{ij}(t) \neq d^*_{i}(t)} \!\!\!\!\!\!\!\!y_{ij}(t) e^{\alpha\big(d_{ij}(t) - d^*_{i}(t)\big)} \bigg) \Bigg) = \notag \\ 
     & \sum_{i\in\mathcal I_{\mathrm{CHO}}} \!\!\! \Bigg( d^*_{i}(t) + \frac{1}{\alpha} \log\!\, \!\bigg( 1 + \!\!\!\!\!\!\!\sum_{j:\, d_{ij}(t) \neq d^*_{i}(t)} \!\!\!\!\!\!\!\!y_{ij}(t) e^{\alpha\big(d_{ij}(t) - d^*_{i}(t)\big)} \bigg) \Bigg), \notag
\end{align}}

\rev{\noindent which converges to the max, namely, $d^*_{i}(t)$, since $d_{ij}(t)<d^*_{i}(t)$ by definition; and then $e^{\alpha(d_{ij}(t) - d^*_{i}(t))}\to 0$ as $\alpha\to\infty$.}

\end{proof}

\rev{Given that from Lemma \ref{lemma_relaxation}, eq.  \eqref{eq:max_function_approx} approximates eq. \eqref{eq:max_function}, the transformed utility:
\begin{align}
        \tilde{g}_t(\bm z_t) = g^{\text{THO}}_t(\bm x_t) 
        + \tilde{g}^{\text{CHO}}_t(\bm y_t) 
        - g^{\text{load}}_t(\bm x_t, \bm y_t) \label{eq:approx_utility}
    \end{align}
is concave, as desired. 
}

\subsection{The Meta-Learning Approach}

We approach
$\mathbb{P}_1$ as a \emph{smoothed online learning} problem and solve it using a \emph{meta-learning} approach based on the \emph{experts} framework \cite{warmuth-experts, hazan-meta}, \rev{as can be seen Figure \ref{fig:algo_mechanism}}. This is particularly well-suited to our setting because switching costs are incorporated into the formulation, and system conditions can vary significantly across time slots due to, e.g., unknown user mobility. As a result, deploying a single learner with a fixed learning rate may perform well in some regimes but poorly in others.

We deploy a set $\mathcal K$ of $K$ learning agents, called \emph{experts}\rev{, or simply \emph{learners}}, each with a different learning step/rate $\theta=(\theta_k, k\in\mathcal K)$ that is applied to online gradient ascent (OGA) \cite{zinkevich}. \rev{An expert with a larger learning rate puts more emphasis on the latest gradient and hence adapts more quickly to rapid changes in SINRs and signaling costs}, making it more suitable for highly volatile scenarios. \rev{In contrast, experts with a smaller learning rate update their decisions more carefully (do not move fast from previous decisions) and hence are less affected by parameter fluctuations. This makes them better suited for relatively stationary scenarios as they induce less handovers (decision changes)}. 

\rev{Given that the volatility of the environment and UEs' mobility patterns are unknown in advance, we use a large enough set $\mathcal K$ of experts, each tuned with a different learning rate. The rationale is that, with careful selection of the rates (number of experts and steps), at least one expert is guaranteed to perform well for the encountered scenario. We identify the best expert for each scenario at runtime by using a meta-learning algorithm, which implements Hedge \cite{freund_hedge} (and not OGA as the experts) and learns how much weight to put on each expert's proposal. These weights are dynamically updated based on observed expert performance.} 

\begin{figure}[!t]
    \centering
    \includegraphics[width=0.95\columnwidth]{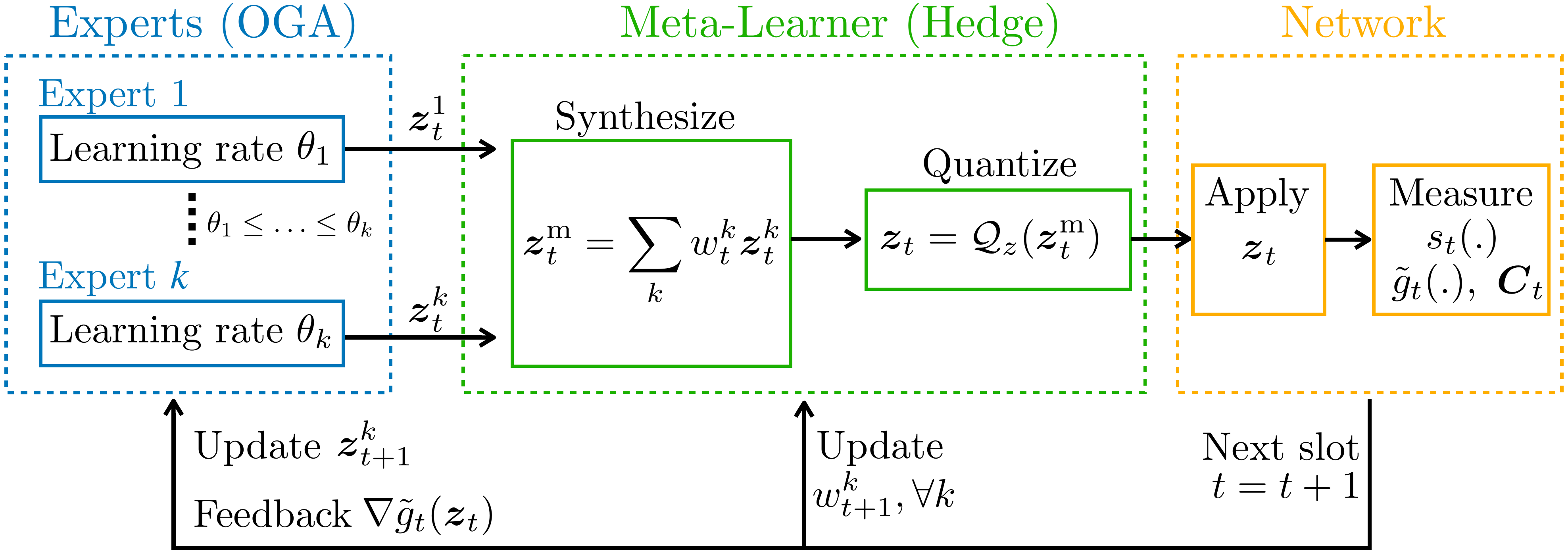}
    \caption{\rev{Learning mechanism. \textit{(i)} Each expert uses a different learning rate and proposes a HO policy to the meta-learner (blue box). \textit{(ii)} The meta-learner combines the proposals, using the experts' weights, and implements the result after discretizing it (green box). \textit{(iii)} The mechanism observes the performance and assesses how good each expert's proposal was (orange box). \textit{(iv)} The experts adapt their decisions and the meta-learner adapts their weights.}}
    \label{fig:algo_mechanism}
\end{figure}

Formally, at the beginning of each slot $t=1, ..., T$, each expert $k \in \mathcal K$ shares its suggestion $\bm{z}_t^k = (\bm x_t^k, \bm y_t^k)$, where we initialize $z_0^k =0$ for all $k$. The meta-learner combines these suggestions into a single decision $\bm z_t^\text{m} = (\bm x_t^\text{m}, \bm y_t^\text{m})$ as follows:
\begin{align}
    \bm z_t^\text{m}=\sum_{k\in\mathcal K}w_t^k\bm{z}_t^k,\label{eq:meta-mix}
\end{align}
\noindent where $\bm{w}_t\!=\!(w_t^k, k\in \mathcal K)$, with $\bm w_t^\top \bm{1}_K\!=\!1$, are the meta-learner's adaptive weights for each expert. The goal is to assign higher weights to experts that perform better.

Since the decision $\bm z_t^\text{m}$ of the meta-learner takes continuous values due to $\bm{w}_t$, we apply a quantization function $Q_{\mathcal Z}$ to project it back to a valid (i.e., implementable) discrete decision $\bm z_t$. We require that the quantization is unbiased, i.e., $\mathbb E[\bm z_t]\!=\!\bm z_t^\text{m}$, which holds by: $(i)$ picking a cell $j$ to assign a THO-enabled UE $i \in \mathcal I_{\text{THO}}$ with probabilities $\bm{x}_i^\text{m}(t)$, creating $\bm x_t \in \mathcal X$, and $(ii)$ deciding whether to prepare a cell $j$ for a CHO-enabled UE $i \in \mathcal I_{\text{CHO}}$ through sampling from a Bernoulli distribution with probabilities $\bm{y}_{ij}^\text{m}(t)$, creating $\bm y_t \in \mathcal Y$.

Once the decision $\bm z_t$ is implemented, the controller observes the SINR values $s_{ij}(t)$ for all UEs and cells, and calculates $\tilde{g}_t(.)$ through eq. \eqref{eq:approx_utility}, as well as the switching costs (by observing $C_t$). We pad with zeros the entries $s_{ij}(t)$ for which no SINR is received, because these cells are unreachable. As a next step, the meta-learner evaluates the decision $\bm z_t^k$ of each expert using the surrogate (i.e., partially linearized) loss:
\begin{align}\label{eq:loss-function}
l_t(\bm z_t^k)=-\langle{\nabla \tilde{g}_t(\bm z_t) },{\bm z_t^k - \bm z_t}\rangle +  \| \bm z_t-\bm  z_{t-1}\|_{C_t}
\end{align}
\noindent and updates its weights using the step $\eta$ (defined later) as:
\begin{align}\label{experts-weights-update}
    w_{t+1}^k=\frac{w_{t}^ke^{-\eta l_t(\bm z_t^k)}}{\sum_{k\in\mathcal K} w_{t}^ke^{-\eta l_t(\bm z_t^k)} }.
\end{align}
\noindent Lastly, it sends the gradient  $\nabla g_t(\bm z_t)$ in the implementable decision $\bm z_t$ to all experts, and they update their choices via OGA:
\begin{align}\label{expert-update}
	\bm{z}_{t+1}^k=\Pi_{\mathcal Z}\Big( \bm{z}_t^k + \theta_k\nabla \tilde{g}_t(\bm z_t)\Big).
\end{align}

\noindent The projection $\Pi_{\mathcal Z}(.)$ ensures that the proposed decisions $\bm{z}_{t+1}^k$ lie in the feasible space,
namely, $\bm z_{t+1}^k \in \mathcal Z$. A summary of the steps can be seen in Algorithm \ref{alg:contra} (\texttt{CONTRA}).

\begin{algorithm}[t] 
	{\small{	
			\nl \textbf{Required}: Step $\eta$ for meta-learner and $\{\theta_k\}_{k\in \mathcal K}$ for experts \\ 
			\nl \textbf{Initialize}: Sort $\theta_1\leq \theta_2\leq \ldots \leq \theta_K$ and set ${w}_1^{k}=\frac{1+1/K}{k(k+1)}$\\
			\nl \For{ $t=1,2,\ldots, T$  }{
				\nl Each expert $k\in \mathcal K$ shares its decision $\bm z_t^k$\\
				\nl  Combine all decisions $\bm z_t^k$ into $\bm z_t^\text{m}$ using \eqref{eq:meta-mix}\\
				\nl Create an implementable discrete decision $\bm{z}_t=Q_{\mathcal Z}(\bm z_t^\text{m})$\\	
                    \nl Observe SINRs, $\tilde{g}_t(.)$ and switching costs $C_t$\\
				\nl Update the weights of experts using \eqref{eq:loss-function} and \eqref{experts-weights-update} \\							
				\nl Send $\nabla g_t(\bm z_t)$ to each expert \\
                    \nl Each expert updates its decision using \eqref{expert-update} \\							
			}
			\caption{{\small{\underline{\smash{CON}}ditional-\underline{\smash{TRA}}ditional HOs (\texttt{CONTRA})}}}\label{alg:contra}
	}}
	\setlength{\intextsep}{0pt} 
\end{algorithm}

\subsection{Performance Guarantees \& Complexity}
To assess the performance of our algorithm, we compare it against a powerful benchmark (i.e., \textit{oracle}) that has full a priori knowledge of the system conditions (i.e., SINR, load, costs etc) and can choose the best possible sequence of decisions over the entire horizon to maximize the cumulative objective of $\mathbb P_1$. This serves as a highly competitive benchmark, exceeding both static (single best solution for all time slots) and dynamic ones (best solution for each time slot independently) \cite{lattimore, hazan-book}. 

For that, we leverage the \emph{Expected Dynamic Regret} \cite{zhang-smoothed-ol}:
\begin{align}
\mathbb{E}\left[{\mathcal R}_T\right] & \triangleq \sum_{t=1}^T\Big( \tilde g_t(\bm{z}_t^\star) - \| \bm z_t^\star-\bm  z_{t-1}^\star\|_{C_t}\Big) \notag \\ &
- \sum_{t=1}^T \mathbb{E}\Big[ \tilde g_t(\bm{z}_t) - \| \bm z_t-\bm  z_{t-1}\|_{C_t}\Big], \label{regret-metric}
\end{align}
where $\{\bm{z}_t\}_t$ and $\{\bm z_t^{\star}\}_t$ are the algorithm's and powerful-oracle's decisions, respectively. The latter is the solution of $\mathbb{P}_1$, and the expectation captures the randomization in \texttt{CONTRA}'s decisions due to quantization $Q_{\mathcal Z}$. Our goal is to design an algorithm that ensures this gap diminishes with time, $\lim_{T\rightarrow \infty} \mathbb{E}[{\mathcal R}_T]/T\!=\!0$ for any possible benchmark sequence $\{\bm z_t^{\star}\}_t$. It is important to highlight that achieving close-to-zero expected dynamic regret is particularly challenging. 

Ideally, designing an effective algorithm in such settings requires prior knowledge of how much optimal decisions can vary over time, commonly characterized by the \textit{path length}:
\begin{align}
    P_T = \sum_{t=1}^T \|\bm{z}_t^{\star}-\bm{z}_{t-1}^{\star}\|_{C_t}.
\end{align}

\noindent However, this information is not available in practice. To overcome this limitation, we have relied on a meta-learning algorithm that adaptively learns from a pool of experts with different learning rates, each tailored to perform well under different system conditions. To bound the expected dynamic regret, we first prove the following lemma.
\begin{lemma}[Bound of Domain and Gradients]\label{lemma_dom_grads}
By considering: $a_{n}{(t)} \leq a_{\max}, \ b_{n'}{(t)} \leq b_{\max}, \ c_n'(t) \leq c_n(t) \leq c_{\max} $ and $M \!=\! \max\left\{ \log c_{\max} \!-\! 1, \ \log I + 1 \right\}$, with $t \!\in\! \mathcal T$, it holds that:

\noindent $\bullet$ $\| \bm z - \bm z'\|_2 \leq \sqrt{2 \, I_{\text{THO}}} + \sqrt{I_{\text{CHO}} \, (J-1)} \triangleq D$,

\noindent $\bullet$ $\| \bm z - \bm z'\|_{C_t} \leq  \sqrt{2 \, I_{\text{THO}} \ a_{\max}} + \sqrt{I_{\text{CHO}} \, (J-1) \ b_{\max}}  \triangleq D_\text{C}$,

\noindent $\bullet$ $\| \bm z \!-\! \bm z'\|_{C_t*} \! \leq \!  \sqrt{2\,I_{\text{THO}}/a_{\max}} +\! \sqrt{I_{\text{CHO}} \, (J\!-\!1)/b_{\max}}  \triangleq D_\text{C*}$,

\noindent $\bullet$ $\|\nabla \tilde g_t(\bm z)\|_2 \leq M\sqrt{I \,J} \triangleq G$,

\noindent $\bullet$ $\|\nabla \tilde g_t(\bm z)\|_{C_t} \leq M\sqrt{(I_{\text{THO}}\ a_{\max}+I_{\text{CHO}}\ b_{\max}) \, J} \triangleq G_{\text{C}}$.
\end{lemma}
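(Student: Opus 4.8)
The plan is to exploit the block structure of the problem: since $\bm z=(\bm x,\bm y)$ and $\bm C_t=\mathrm{diag}(\bm A_t,\bm B_t)$ is block diagonal, every norm in the statement splits additively across the THO block (the $\bm x$ variables) and the CHO block (the $\bm y$ variables), i.e. $\|\bm z-\bm z'\|_2^2=\|\bm x-\bm x'\|_2^2+\|\bm y-\bm y'\|_2^2$ and likewise for the $C_t$- and $C_t*$-norms. I would therefore bound each block separately and recombine using the elementary inequality $\sqrt{u+v}\le\sqrt u+\sqrt v$ for $u,v\ge0$, which is precisely what produces the sum-of-two-square-roots form of $D$, $D_{\mathrm C}$ and $D_{\mathrm{C}*}$.

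For the domain bounds I would argue geometrically about the diameters of $\mathcal X^{\text{c}}$ and $\mathcal Y^{\text{c}}$. For a single THO user the feasible vector $\bm x_i$ lives in the probability simplex, whose squared diameter is $2$ (attained between two distinct vertices), so $\|\bm x_i-\bm x_i'\|_2^2\le2$; summing over the $I_{\text{THO}}$ users gives $\|\bm x-\bm x'\|_2\le\sqrt{2I_{\text{THO}}}$. For a single CHO user $\bm y_i$ lies in the box $[0,1]^J$ cut by $1\le\sum_j y_{ij}\le b_i$; bounding the coordinatewise differences (each in $[-1,1]$) over this polytope gives the per-user squared-diameter constant $J-1$ claimed in the statement, hence $\|\bm y-\bm y'\|_2\le\sqrt{I_{\text{CHO}}(J-1)}$. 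The weighted versions then follow by factoring the extreme weights out of the diagonal forms $\|\bm x\|_{A_t}^2=\sum_n a_n(t)x_n^2$ and $\|\bm x\|_{A_t*}^2=\sum_n x_n^2/a_n(t)$: using $a_n(t)\le a_{\max}$, $b_{n'}(t)\le b_{\max}$ for the $C_t$-norm and the reciprocal weights for the dual $C_t*$-norm, multiplied by the Euclidean diameters just derived. The only care needed here is to track the direction of the weight inequality when passing to the reciprocals so that the constants land as stated.

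For the gradient bounds I would differentiate the three summands of $\tilde g_t$ in \eqref{eq:approx_utility} coordinatewise and show every partial derivative has absolute value at most $M$. The THO partial is $\partial_{x_{ij}}\tilde g_t=\log c_{ij}(t)-\log\ell_j(t)-1$; using $c_{ij}(t)\le c_{\max}$ and $1\le\ell_j(t)\le I$ (no idle cell, at most $I$ associated users) bounds it between $-(\log I+1)$ and $\log c_{\max}-1$, i.e. by $M$. The load term contributes $-(\log\ell_j(t)+1)$ to every coordinate, again of magnitude at most $\log I+1\le M$. Since there are exactly $I\cdot J$ coordinates ($I_{\text{THO}}J$ for $\bm x$ and $I_{\text{CHO}}J$ for $\bm y$, with $I_{\text{THO}}+I_{\text{CHO}}=I$), the Euclidean bound $\|\nabla\tilde g_t\|_2\le M\sqrt{IJ}$ is immediate, and weighting the two blocks by $a_{\max}$ and $b_{\max}$ yields $\|\nabla\tilde g_t\|_{C_t}\le M\sqrt{(I_{\text{THO}}a_{\max}+I_{\text{CHO}}b_{\max})J}$.

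I expect the main obstacle to be the CHO partial derivative coming from the LogSumExp surrogate, $\partial_{y_{ij}}\tilde g^{\text{CHO}}_t=\tfrac1\alpha\,c_{ij}(t)^\alpha/\sum_k y_{ik}(t)c_{ik}(t)^\alpha$: unlike the affine THO term, this softmax-type quantity is not trivially bounded by $M$ for an arbitrary relaxed $\bm y$, and a naive estimate scales like $\tfrac1\alpha(c_{\max}/c_{\min})^\alpha$. The key is to invoke the tightness established in Lemma~\ref{lemma_relaxation}: as $\alpha$ grows the masking weights concentrate on the single best prepared cell, so the surrogate term is dominated and the load contribution $\log\ell_j(t)+1\le M$ governs the bound, keeping every CHO coordinate within $M$. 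Pinning down this argument uniformly over $\mathcal Z^{\text{c}}$ (rather than only in the limit), together with getting the exact constants $J-1$ and the reciprocal-weight factor in $D_{\mathrm{C}*}$ right, is where the real work lies; the counting and norm-splitting steps are then routine.
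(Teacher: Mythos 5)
Your treatment of the domain diameters, the weighted norms, and the THO/load gradient coordinates matches the paper's proof in substance: the paper also splits $\bm z$ into its $\bm x$ and $\bm y$ blocks, bounds per-user distances ($\sqrt 2$ per THO user; the ``all cells prepared'' versus ``one cell prepared'' pair giving $J-1$ per CHO user), handles the $C_t$-norms by pulling out the extreme weights, and bounds each of the $IJ$ gradient coordinates by $M$ via $\partial \tilde g_t/\partial x_{ij} = \log c_{ij} - \log \ell_j - 1$ together with $1\le \ell_j \le I$. Two remarks on the parts you defer as ``routine'': your coordinatewise argument for the CHO block (each difference in $[-1,1]$) yields the constant $J$, not the stated $J-1$, which in the paper comes from the specific extremal pair above; and for the dual norm, no amount of ``tracking the direction of the inequality'' recovers the stated constant from $a_n(t)\le a_{\max}$ alone, since $\|\cdot\|_{C_t*}^2$ has weights $1/a_n(t)$ and an upper bound on it requires a \emph{lower} bound on the weights (the paper's one-line ``similarly \ldots and its dual'' shares this defect, so it is a flaw of the statement rather than of your plan specifically).

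The genuine gap is your handling of the CHO gradient, which is also the one place where you depart from the paper. You correctly observe that the LogSumExp partial $\partial_{y_{ij}}\tilde g^{\text{CHO}}_t = \tfrac1\alpha\, c_{ij}(t)^\alpha / \sum_k y_{ik}(t)\, c_{ik}(t)^\alpha$ is not trivially bounded by $M$, but the fix you propose --- invoking the concentration behind Lemma~\ref{lemma_relaxation} so that the surrogate term is ``dominated'' --- cannot be made to work. Concentration controls the softmax weights of the cells actually carrying the mass of $\bm y_i$; it does not control the derivative in the direction of a coordinate whose $c_{ij}$ exceeds the softmax mean. Concretely, at the feasible point where user $i$'s mass sits entirely on a cell of throughput $c_{\min}$, the partial with respect to a cell of throughput $c_{\max}$ equals exactly $\tfrac1\alpha\,(c_{\max}/c_{\min})^{\alpha}$, which diverges as $\alpha\to\infty$ (and this happens even at discrete points of $\mathcal Z$). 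Hence no bound of the form $M$, independent of $\alpha$, holds for $\nabla\tilde g^{\text{CHO}}_t$ over $\mathcal Z^{\text{c}}$, and the obstacle you flagged is fundamental rather than technical. The paper never confronts it: its proof differentiates the CHO utility in the \emph{linear} form of eq.~\eqref{eq:utility_open}, i.e., $\partial_{y_{ij}} = \log c'_{ij} - \log \ell_j - 1$ with the coefficient $\log c'_{ij}$ treated as data (this is exactly why the hypothesis $c'_n(t)\le c_n(t)\le c_{\max}$ appears in the lemma), and then counts coordinates as in the THO case. So the lemma's constants come from the non-smoothed, linear-in-$\bm y$ objective, not from the surrogate; your proposal replaces that step with an argument that fails, and as written it cannot be completed.
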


\begin{proof}
We provide the full proof in the Appendix.
\end{proof}

The main result for the performance guarantee of \texttt{CONTRA} is presented in the sequel.

\begin{lemma}[Performance Analysis / Optimality Guarantee]\label{lemma_main} Using the parameters:
\begin{align}
    & \bullet \ \  K=\left\lceil\log_2\sqrt{1+\!2T} \right \rceil+1, \label{eq:num_experts}\tag{\rev{\theequation}}\addtocounter{equation}{+1} \\
    &\bullet \ \ \theta_k=2^{k-1}\sqrt{\frac{D_\text{C}^2}{T(G^2+2G_\text{C})}},  \  k=1,\ldots,K, \label{eq:lr_experts} \tag{\rev{\theequation}}\addtocounter{equation}{+1} \\
    & \bullet \ \ \eta=1/\sqrt{T\nu} \ \text{ with } \tag{\rev{\theequation}}\addtocounter{equation}{+1}
    \\ & \quad \,\,\,\,  \nu\triangleq (D_\text{C}\!+\!1/8)(GD\!+\!2D_\text{C})^2, \tag{\rev{\theequation}}\addtocounter{equation}{+1}
\end{align}

\noindent the discrete decisions $\{\bm z_t\}_t$ of our algorithm \texttt{CONTRA} ensure: 
\begin{align}
\mathbb{E}\left[{\mathcal R}_T \right]\! \leq \! \notag \sqrt{T}\Big( & \sqrt{\nu}\left(1+\ln(1/w_1^k)\right) + \\ & \sqrt{(G^2+2G_\text{C})(D_\text{C}^2+2D_{\text{C}*}P_T)} \ \Big) \notag + \\ & \hspace{-1.9cm} T\Big(G\!+\!\sqrt{a_{\max}\!+\!b_{\max}}\Big)\sqrt{I_{\text{CHO}}J/4 + I_{\text{THO}}(1\!-\!1/J)} \label{eq:lemma_res},
\end{align}
\end{lemma}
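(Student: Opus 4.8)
The plan is to bound $\mathbb{E}[\mathcal{R}_T]$ through a three-way decomposition that separates (i) the error introduced by the randomized quantization $Q_{\mathcal Z}$, (ii) the regret of the Hedge meta-learner against the single best expert, and (iii) the dynamic regret of that expert's OGA iterates against the oracle sequence $\{\bm z_t^\star\}$. Throughout I would exploit the concavity of $\tilde g_t$ established after Lemma~\ref{lemma_relaxation}, which lets the first-order surrogate $l_t$ of eq.~\eqref{eq:loss-function} dominate the true utility regret, together with the uniform diameter and gradient bounds $D,D_\text{C},D_{\text{C}*},G,G_\text{C}$ from Lemma~\ref{lemma_dom_grads}. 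These three pieces will map, respectively, onto the third, first, and second summands of eq.~\eqref{eq:lemma_res}.

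First, I would isolate the quantization gap. Since $Q_{\mathcal Z}$ is unbiased, $\mathbb E[\bm z_t]=\bm z_t^\text{m}$ conditioned on the past, so $G$-Lipschitzness of $\tilde g_t$ gives $\tilde g_t(\bm z_t^\text{m})-\mathbb E[\tilde g_t(\bm z_t)]\le G\,\mathbb E\|\bm z_t-\bm z_t^\text{m}\|_2$, and the triangle inequality bounds the residual switching cost by terms of order $\sqrt{a_{\max}+b_{\max}}\,\mathbb E\|\bm z_t-\bm z_t^\text{m}\|_2$. The per-slot quantization variance is computed component-wise: the one-hot categorical sampling of each THO user contributes $\sum_{j}x^\text{m}_{ij}(1-x^\text{m}_{ij})\le 1-1/J$, while the independent Bernoulli sampling of each CHO entry contributes at most $1/4$, so $\mathbb E\|\bm z_t-\bm z_t^\text{m}\|_2^2\le I_{\text{CHO}}J/4+I_{\text{THO}}(1-1/J)$; Jensen's inequality applied to $\sqrt{\cdot}$ then yields the third, $O(T)$ term of eq.~\eqref{eq:lemma_res}.

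Second, on the continuous meta-decisions I would split the residual regret through the best expert $k^\star$. Because $l_t$ is linear in its argument and $\bm z_t^\text{m}=\sum_{k}w_t^k\bm z_t^k$, the weighted loss equals $\langle \bm w_t,\bm\ell_t\rangle$ with $\bm\ell_t=(l_t(\bm z_t^k))_k$, so the exponential-weights update of eq.~\eqref{experts-weights-update} with the non-uniform prior $w_1^k$ and step $\eta=1/\sqrt{T\nu}$ yields the standard bound $\tfrac{\ln(1/w_1^{k^\star})}{\eta}+\tfrac{\eta}{8}\sum_{t}(\text{loss range})^2$; bounding the per-round surrogate magnitude by $GD+2D_\text{C}$, i.e., the linearized utility ($GD$) together with the meta-level switching contribution ($2D_\text{C}$), makes the prescribed $\nu$ balance the two halves and produces the first summand of eq.~\eqref{eq:lemma_res}. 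For the best expert I would invoke Zinkevich's dynamic-regret bound for OGA with step $\theta_{k^\star}$, which in the $C_t$-geometry reads $\tfrac{D_\text{C}^2+2D_{\text{C}*}P_T}{2\theta_{k^\star}}+\tfrac{\theta_{k^\star}}{2}\sum_{t}\|\nabla\tilde g_t\|^2$, where the gradient sum is controlled via $G^2+2G_\text{C}$ from Lemma~\ref{lemma_dom_grads} and the path length $P_T$ enters through the comparator's movement.

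The remaining step, and the main obstacle, is that this OGA bound is minimized by a step $\theta^\star\!\propto\!\sqrt{(D_\text{C}^2+2D_{\text{C}*}P_T)/(T(G^2+2G_\text{C}))}$ depending on the \emph{unknown} path length $P_T$. I would resolve this with the geometric grid $\theta_k=2^{k-1}\theta_1$ of eqs.~\eqref{eq:num_experts}--\eqref{eq:lr_experts}: since $P_T\in[0,TD_\text{C}]$, the choice $K=\lceil\log_2\sqrt{1+2T}\rceil+1$ guarantees a grid point within a factor of two of $\theta^\star$, so the best expert attains the near-optimal rate $\sqrt{(G^2+2G_\text{C})(D_\text{C}^2+2D_{\text{C}*}P_T)}$, i.e., the second summand of eq.~\eqref{eq:lemma_res}. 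The subtlety to watch is that the switching-cost (``memory'') terms couple the meta-level and expert-level analyses: one must verify that the Hedge weight dynamics translate into a controlled movement of $\bm z_t^\text{m}$, and that the surrogate $l_t$, linearized at the \emph{implemented} point $\bm z_t$ rather than at $\bm z_t^\text{m}$, still sandwiches the concave utility gap in expectation. Collecting the three contributions and substituting the prescribed $\eta$ and $\{\theta_k\}$ gives eq.~\eqref{eq:lemma_res}.
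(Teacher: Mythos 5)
Your proposal is correct and follows essentially the same route as the paper: the paper's proof decomposes $\mathbb{E}[\mathcal R_T]$ into the regret of the continuous meta-decisions $\{\bm z_t^\text{m}\}_t$ (handled exactly as you describe, by bounding each expert's OGA dynamic regret against the benchmark and the Hedge meta-learner's regret against the best expert, with the geometric step-size grid covering the unknown $P_T$ — details the paper delegates to \cite{kalntis_infocom25, zhang-smoothed-ol, zhang-18}) plus the quantization error. Your treatment of the latter — unbiasedness, Lipschitz constant $G+\sqrt{a_{\max}+b_{\max}}$, per-user categorical variance $1-1/J$ and per-entry Bernoulli variance $1/4$, then Jensen — coincides with the paper's $(*)$ and $(**)$ steps verbatim.
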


\begin{proof}[Proof]
We provide the full proof in the Appendix.

\end{proof}

With Lemma \ref{lemma_main}, we bound the expected regret of the implementable decisions, in contrast to previous works \cite{andrews-association, kelleler-jsac23}. Even though the continuous preparation variables achieve sublinear dynamic regret, the discretization introduces an unavoidable non-diminishing error.
As shown in Sec. \ref{sec:evaluation}, this error is small in practical scenarios, and the algorithm converges towards the optimal solutions.

\rev{\noindent\textbf{Computational Complexity.} The primary source of complexity in our algorithm arises from the projection step performed by each learner $k \in \mathcal K$ during OGA, as shown in eq. \eqref{expert-update}; all other computations are executed in constant time $\mathcal{O}(1)$. The projection difficulty depends on \textit{(i)} the structure of the feasible set, which essentially depends on the constraints of our problem, and \textit{(ii)} the size of the problem, which in practice corresponds to the area monitored by each controller (i.e., the number of UEs and cells). 
We note that for convex sets in general, there might be no closed-form solution, with the projection requiring $\mathcal O(d^3)$, where $d$ is the set's dimension.}

\rev{Our current implementation indeed computes the projections directly by solving the concave subproblems, which have been found sufficient for the studied problems. Specifically, and despite relying on a general-purpose solver rather than specialized projection routines and a modest hardware (default) setup of a MacBook Pro equipped with an Apple M1 chip (8-core CPU), the per-slot inference (execution) time remains below the near-RT threshold required in O-RAN operation when up to 150 UEs and 25 cells are considered. 

More precisely, the inference time increases from 18 ms ($I=1$ user) to 62 ms ($I=150$ users) when $J=5$ cells. For $J=15$ ($J=25$), it begins near 18 ms (18 ms) for $I=1$ user and reaches up to 384 ms (1 s) for $I=150$ users. It is important to note that the exact runtime may vary depending on the specifications of different machines. For larger search spaces, one can employ tailored projection algorithms, such as \cite{shalev_proj, paschos_oco_caching}, and/or execute the algorithm on GPUs or high-performance computing servers.}

\section{Dynamic Handover Type Selection }\label{sec:model_extension}

\noindent\rev{\textbf{Problem Formulation.}} 
In contrast to the framework of Sec. \ref{sec:system_model}, we now consider a more flexible and dynamic formulation in which users are not a priori assigned to a specific HO type. Instead, the network controller can decide on-the-fly (i.e., at each time slot $t$), whether a UE should be THO- or CHO-enabled, based on current system conditions and performance trade-offs, enabling a more responsive and efficient HO strategy, as envisioned by 6G and NextG \cite{chen_5G6G}. Specifically, we allow each UE to be in either HO mode and decide this dynamically: when a UE moves fast, it may benefit from CHO due to the rapid signal fluctuations that lead to higher THO failures and delays, while it might find THO suitable when it moves slower.

In this case, the controller problem becomes: 
\begin{align}
\mathbb{P}_2: & \max_{\{\bm{z}_t\}_{t}} \sum_{t=1}^T \Big(g_t(\bm z_t) - \ \| \bm z_t-\bm  z_{t-1}\|_{C_t} \Big) \notag \\ & 
\textrm{s.t.} \quad
\sum_{j \in \mathcal J} x_{ij}(t) \leq 1, \label{eq:THO_association_real}  \quad \forall i \in \mathcal I, t\in \mathcal T
\\ 
& \quad\quad y_{ik}(t) \leq 1 \!-\! \sum_{j \in \mathcal J} x_{ij}(t), \label{eq:CHO_association_real}  \ \forall i \in \mathcal I, k \in \mathcal J, t\in\mathcal T
\\ 
& \quad\quad\sum_{j \in \mathcal J} y_{ij}(t)\leq b_i, \label{eq:max_prepared_cells_real} \quad \forall i \in \mathcal I, t \in\mathcal T
\\ 
& \quad\quad \sum_{j \in \mathcal J} x_{ij}(t) + \sum_{j \in \mathcal J} y_{ij}(t) \geq 1, \label{eq:either_CHO_or_THO} \quad \forall i \in \mathcal I, t \in \mathcal T
\\ 
& \color{black} \quad\quad \sum_{i \in \mathcal I} x_{ij}{(t)}\!+\!\sum_{i \in \mathcal I} y_{ij}{(t)} \color{black}\leq C_j, \label{eq:load_capac_constr} \ \forall j \in \mathcal J, \forall t \in \mathcal T 
\\ 
& \quad\quad \color{black} \bm z_t = (\bm x_t, \bm y_t) \in \{0,1\}^{I\cdot J}\!\times\!\{0,1\}^{I\cdot J}, \label{eq:binary_decisions_real} \ \forall t \in \mathcal T.
\end{align}

\noindent Eq. \eqref{eq:THO_association_real} allows a UE $i \in \mathcal I$ to be THO-enabled (i.e., explicitly assigned to a cell if $x_{ij}(t) \!=\! 1$), or CHO-enabled (i.e., implicitly assigned through CHOs to a cell, otherwise). Eq. \eqref{eq:CHO_association_real} complements the previous one, ensuring that a UE can be explicitly or implicitly assigned to a cell, but not both; eq. \eqref{eq:max_prepared_cells_real} prevents each CHO-enabled UE from having more than $b_i$ prepared cells, similarly to Sec. \ref{sec:system_model}. Eq. \eqref{eq:either_CHO_or_THO} ensures that a UE will be explicitly or implicitly assigned, preventing blocking completely some UEs\rev{, and eq. \eqref{eq:load_capac_constr} enforces each cell's capacity constraints using the total number of users.} Lastly, eq. \eqref{eq:binary_decisions_real} defines the binary decisions.

\noindent\rev{\textbf{Performance Analysis.}} 
The analysis and proofs follow nearly verbatim those of Sec. \ref{sec:system_model}, with the modification of the sampling technique $Q_{\mathcal Z}$. \rev{We highlight that the analysis of Sec. \ref{sec:system_model} relies on fixed sets $\mathcal I_{\text{THO}}$ and $\mathcal I_{\text{CHO}}$, and therefore,} the unbiased estimator defined earlier cannot be used in the case of dynamic HO type selection, where $\bm x_t$ and $\bm y_t$ are coupled \rev{with the controller deciding the HO type of each UE in every slot.} For that reason, we create a new unbiased estimator. 

\rev{Let $\bm z_t^{\text{m}}=\big(\bm x_t^{\text{m}},\bm y_t^{\text{m}}\big)$ denote the \emph{continuous} meta-learner output at slot $t$, where $\bm x_t^\text{m} \in [0,1]^{I\cdot J}$ and $\bm y_t^\text{m} \in [0,1]^{I\cdot J}$. The definition of $\bm z_t^{\text{m}}$ resembles eq. \eqref{eq:meta-mix}, with the distinction that in that earlier formulation, the users were partitioned into two disjoint sets $\mathcal I_{\text{THO}}$ and $\mathcal I_{\text{CHO}}$.}

\rev{To construct the implementable binary decision $\bm z_t = (\bm x_t, \bm y_t)$, we start by} defining for each $i \in \mathcal I$ the probability to be explicitly assigned (THO-enabled) \rev{at slot $t$}:
\begin{align}
    \color{black} \pi_i(t) \;\triangleq\; \color{black} \sum_{j\in\mathcal J} x^{\text{\textcolor{black}{m}}}_{ij}(t),
\end{align}

\noindent where eq. \eqref{eq:THO_association_real} ensures that $\pi_i(t) \in [0,1]$. To obtain an unbiased estimator, i.e., $\mathbb E[\bm z_t]\!=\!\bm z_t^{\text{\textcolor{black}{m}}}$, we sample:
\begin{align}
    {z}_{i}(t) \sim \text{Bernoulli}\big(\pi_i(t)\big) \tag{\rev{\theequation}}\addtocounter{equation}{+1},
\end{align}
\noindent and if ${z}_{i}(t)=1$, the UE $i \in \mathcal I$ is THO-enabled at slot $t$, else CHO-enabled. \rev{Lastly, the associations/preparations are decided with the normalized quantities: ${x}_{ij}(t) = {x_{ij}^{\text{\textcolor{black}{m}}}(t)}/{\pi_i(t)}$ and ${y}_{ij}(t) = {y_{ij}^{\text{\textcolor{black}{m}}}(t)}/{(1-\pi_i(t))}.$}

From the law of total expectation, and as $ x_{ij}(t)\!=\!0$ when $ z_i(t)\!=\!0$, we get $\mathbb E[x_{ij}(t)]\!=\!x_{ij}^{\text{\textcolor{black}{m}}}(t)$ and similarly $\mathbb E[y_{ij}(t)]\!=\! y_{ij}^{\text{\textcolor{black}{m}}}(t)$. This concludes that indeed, $\mathbb E[\bm z_t]\!=\!\bm z_t^{\text{\textcolor{black}{m}}}$.

\section{Performance Evaluation}\label{sec:evaluation}

%
%
%
%
%
%

\color{black}
We compare \texttt{CONTRA} with various \textit{(i)} 3GPP-compliant, threshold-based THO and CHO algorithms, as well as \textit{(ii)} RL benchmarks. The former are the algorithms currently used by MNOs and antenna vendors \cite{3gpp_38_300, 3gpp_38_331}, solving either solely the THO or CHO problem (i.e., not jointly). Given that these benchmarks treat THO and CHO as independent procedures, we adapt and extend \textit{three} existing foundational \textit{RL} algorithms from the literature that were originally designed for other tasks \cite{reinforce_williams, kelleler-jsac23, ppo_scc}, such as only THO optimization. In this way, it becomes possible to compare the proposed algorithm with more advanced (i.e., not relying on pre-defined thresholds) benchmarks for precisely the problem at hand, namely, the joint optimization of traditional and conditional handovers.

\begin{figure}[t]
    \centering
    \includegraphics[width=0.6\columnwidth]{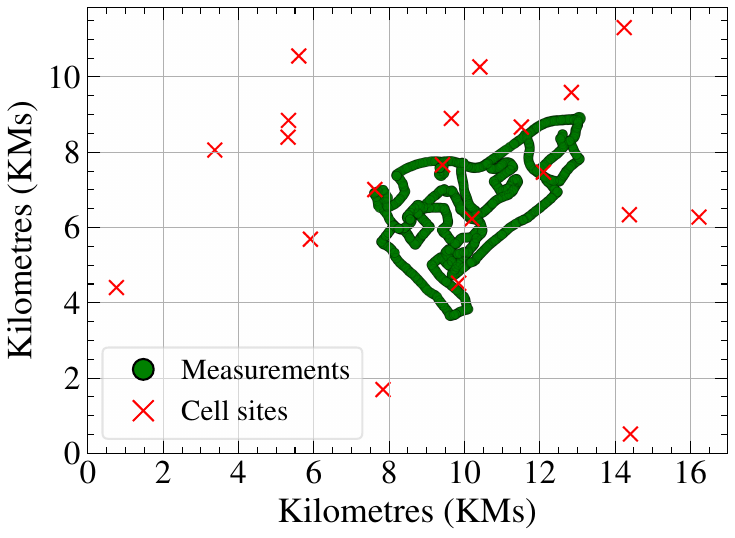}
    \caption{Cell sites (red) and SINR measurement locations (green) used for evaluation.}
    \label{fig:real_sinr_map_anon}
\end{figure}

The algorithms are evaluated in two main scenarios: \textit{(i)} \textit{volatile} SINR scenario, in which $s_{ij}(t)$ fluctuates every 10 slots within the range of [0, 30]dB \cite{3gpp_36_133}, reflecting a wide spectrum of signal conditions (i.e., from poor to excellent values), and \textit{(ii)} real SINR scenario from crowdsourced countrywide measurements. For the latter, we focus on an area of approximately 160 sq. kms, where 73,303 signal measurements in a second granularity are taken for 100 cells (20 cell sites). These cell sites are shown in Figure \ref{fig:real_sinr_map_anon}, marked in red, and the measurement areas are indicated in green. This dataset presents a highly competitive scenario as SINRs can vary arbitrarily and occasionally exhibit adversarial behavior (see Sec. \ref{sec:data_collection_analysis}). Given that these measurements are anonymized, we consider \rev{100} fast-moving UEs, with random velocities ranging from 20 to 28 m/s and variances in [0, 5], placed at $t=1$ in a random location with measurement (green). We adopt the Gauss-Markov mobility model \cite{gauss_markov}, in line with prior works \cite{andrews-association}, with 0.9 randomness parameter to find the location of each UE in each subsequent slot $t=2, ..., T$, for $T=1$k. To incorporate the signal measurements, we map each new location to the nearest location with available measurements.

%
%
%
%
%
%

\rev{\subsection{Understanding the Proposed Algorithm}}

First and foremost, it is imperative to showcase the differences between the static (Sec. \ref{sec:system_model}) and dynamic (Sec. \ref{sec:model_extension}) HO type selection. For that reason, we compare four variants of \texttt{CONTRA}: \textit{(i)} dynamic HO type selection, where the controller decides if and when each user should operate in THO or CHO mode according to our framework, and three variants of static HO type selection, namely, \textit{(i)} a random split with half of the users predefined as THO-enabled and the other half as CHO-enabled, as well as a configuration in which all users are \textit{(ii)} THO-enabled, and \textit{(iii)} CHO-enabled.

\begin{figure}[t]
    \centering
    \subfigure[]{\includegraphics[width=0.48\linewidth]{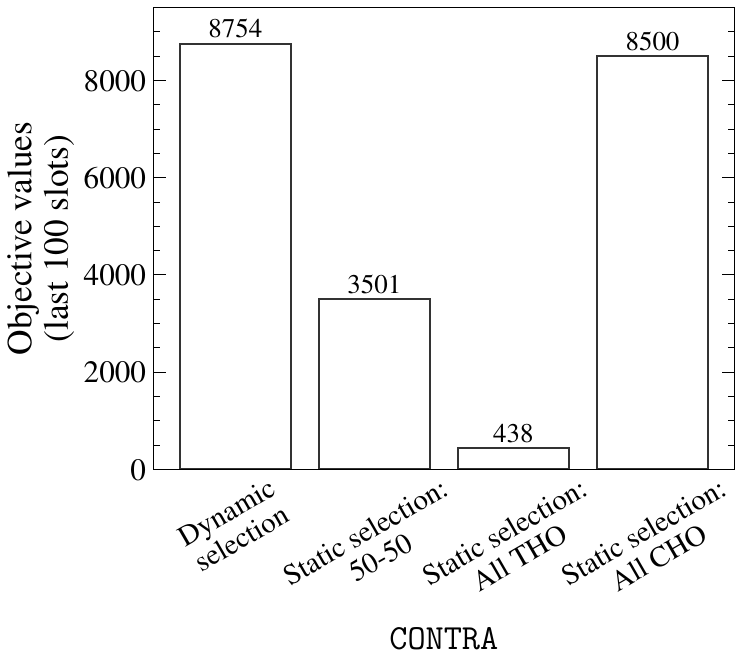}\label{fig:static_vs_dynamic_model_objective}}
    \subfigure[]    {\includegraphics[width=0.47\linewidth]{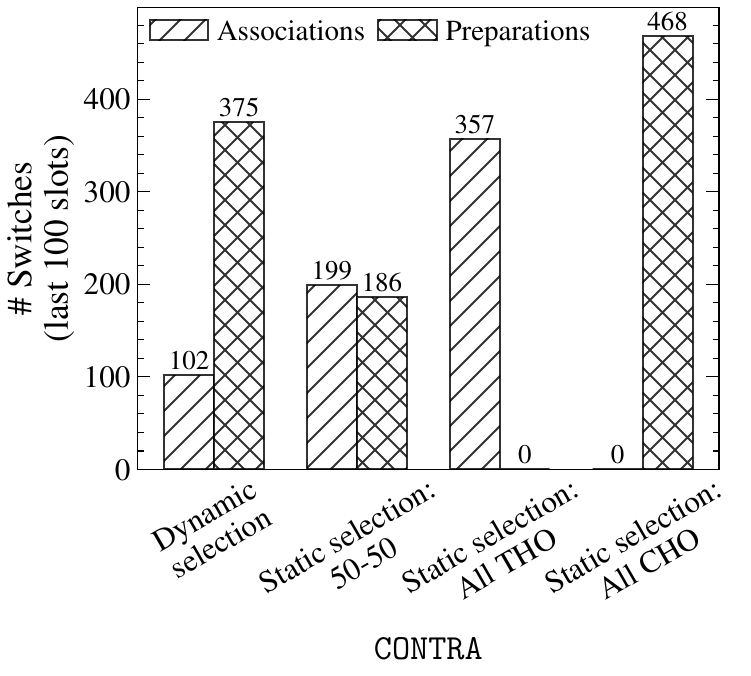}\label{fig:static_vs_dynamic_model_switches}}
    \caption{\rev{Real SINR scenario with $I=100$ UEs. Over the last 100 slots of the $T=1$k simulation, (a) shows the objective function values (i.e., throughput and switching costs), and (b) the number of switches in associations (i.e., $\bm x_t$) and preparations (i.e., $\bm y_t$), for the dynamic and three variants of static HO type selection of \texttt{CONTRA}.
    }} 
    \vspace{-5mm}
    \label{fig:static_vs_dynamic_HOpolicy}
\end{figure}

In detail, Figure \ref{fig:static_vs_dynamic_HOpolicy} compares the performance of these four \texttt{CONTRA} variants over the last 100 slots of the real-SINR scenario. When all users are in THO mode, the objective function values are the lowest (i.e., 438), primarily due to the large number (357) and high cost of THOs. The 50–50 split between THO- and CHO-enabled users improves performance by nearly $\times$8 compared to the all-THO configuration; yet it remains roughly $\times$2.5 lower than both the automatic mode-selection and all-CHO cases. The automatic mode achieves higher performance ($\approx3\%$) compared to the all-CHO configuration, as it executes fewer CHO preparations, even though it requires 102 THOs. From here onward, \texttt{CONTRA} is assessed in the dynamic HO formulation.

Secondly, we focus on the trade-off between throughput and the two switching cost terms to understand the importance of the latter for the studied problem. Figure \ref{fig:sw_costs_g} shows the utility function and the number of association and preparation changes when switching costs are \textit{(i)} excluded, and \textit{(ii)} taken into account, for the real-SINR scenario. Focusing on the last 250 slots of $T=1$k, we note that taking into account the switching cost reduces the number of associations (and thus, THOs) and preparation changes by approximately 77\%, while achieving a 2.7\% higher throughput. Hence, it becomes evident that solely maximizing the utility/throughput (i.e., without considering the switches) may not be the optimal strategy, given that comparable, or even better, performance can be achieved with fewer switches, thereby reducing signaling overhead, resource utilization, and HO delays.

\begin{figure}[t]
    \centering
    \subfigure[]{\includegraphics[width=0.38\linewidth]{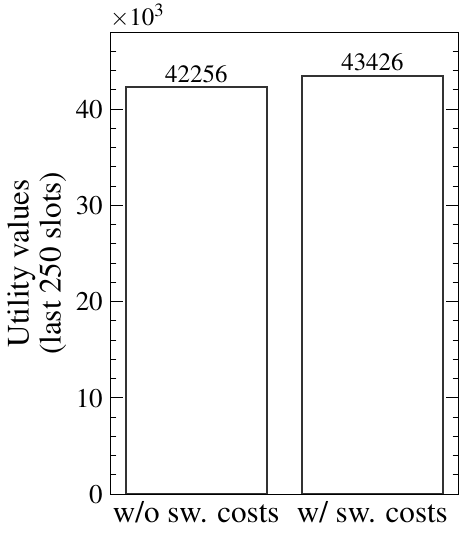}\label{fig:thr_sw_costs1}}
    \subfigure[]{\includegraphics[width=0.58\linewidth]{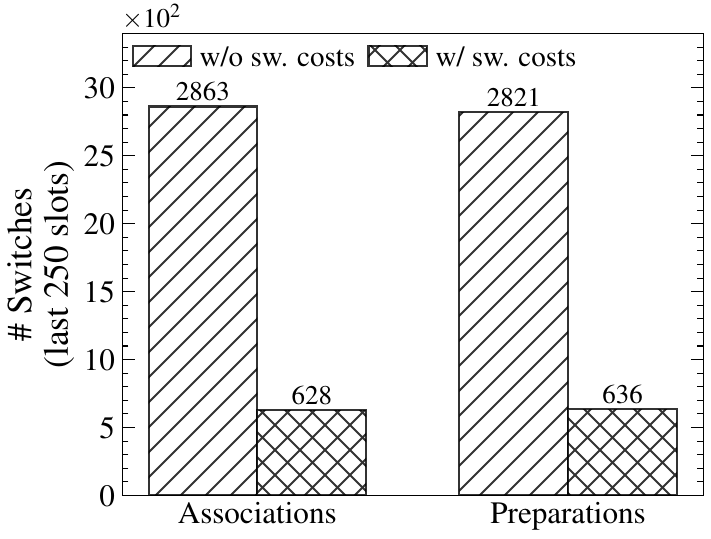}\label{fig:thr_sw_costs2}}
    \caption{\rev{Real SINR scenario with $I=100$ UEs. Over the last 250 slots of the $T=1$k simulation, (a) shows the utility function values (i.e., $\tilde{g}_t$), and (b) the number of switches in associations (i.e., $\bm x_t$) and preparations (i.e., $\bm y_t$), for ignoring and taking into account the switching costs.}} 
    \vspace{-5mm}
    \label{fig:sw_costs_g}
\end{figure}

Thirdly, to show the importance of combining learners with different rates, \rev{Figure \ref{fig:tilde_g_real_stationary}} shows the attained throughput $\tilde g(.)$ in a case where the SINRs are \textit{stationary} (changing only every 200 slots) and real (changing significantly between slots). Given that $T=1$k, eq. \eqref{eq:num_experts} determines that 7 experts will be used, with learning rates 0.0135, 0.0271, 0.0542, 0.1083, 0.2167, 0.4334, and 0.8668 (combining eq. \eqref{eq:lr_experts} and Lemma \ref{lemma_dom_grads}). In the stationary case, the OGA \rev{experts} with small learning rates obtain up to 12\% better throughput than the highest-learning-rate one. On the other hand, in the real case, the highest-learning-rate expert achieves 65\% more throughput. These findings support the claims that in more stationary scenarios, smaller-learning-rate \rev{experts} perform better \rev{as they update their decisions more carefully}, while in volatile/real cases, it is the opposite\rev{, as higher-learning-rate experts adapt more quickly to the rapid changes of the environment. 
}
\smallskip

\color{black}

\color{black}
\subsection{Regret Analysis}
\color{black}

Sec. \ref{sec:algorithm} shows that the regret guarantees hold for any benchmark; however, computing the benchmark that has full knowledge of the future is computationally intensive for mixed-integer programs \cite{kelleler-jsac23, Bragin2022}. To facilitate, therefore, the average dynamic regret calculation, the optimal \texttt{Oracle} considered solves the optimization problem \textit{in every step} using \textit{CVXPY} \cite{diamond2016cvxpy}, \rev{and a smaller number of $I=20$ users}. Apart from the \texttt{CONTRA}, we show the average dynamic regret of 3GPP threshold-based benchmarks. In the sequel, we refer to the 3GPP CHO benchmarks as $(CL, TTT)$, where the first element denotes the number of cells that are prepared in each slot, provided they are the highest-SINR cells for more than $TTT$ consecutive slots (similar to the A3 event); while for the 3GPP THO benchmarks, only $TTT$ is used. For instance, the benchmark $(1,2)$ prepares one cell for each user if it remains the highest-SINR for more than two slots.

\begin{figure}[t]
    \centering
    \subfigure[]{\label{fig:tilde_g_stationary}\includegraphics[scale=0.35]{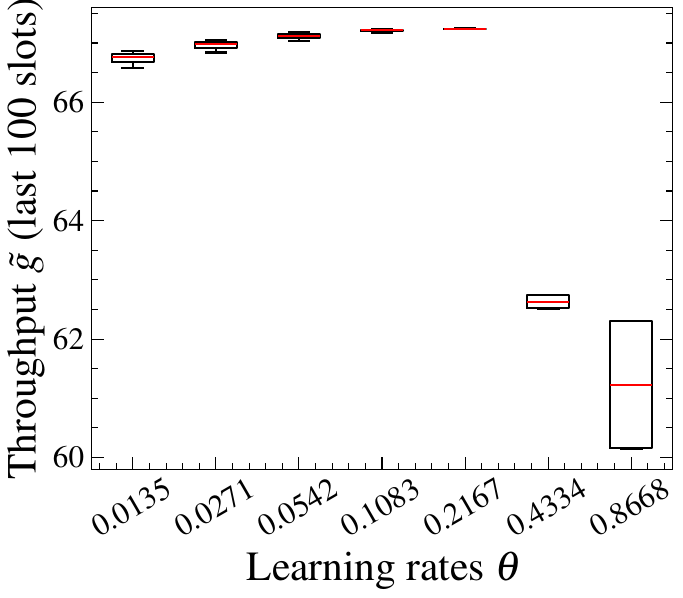}}
    \subfigure[]{\label{fig:tilde_g_real}\includegraphics[scale=0.35]{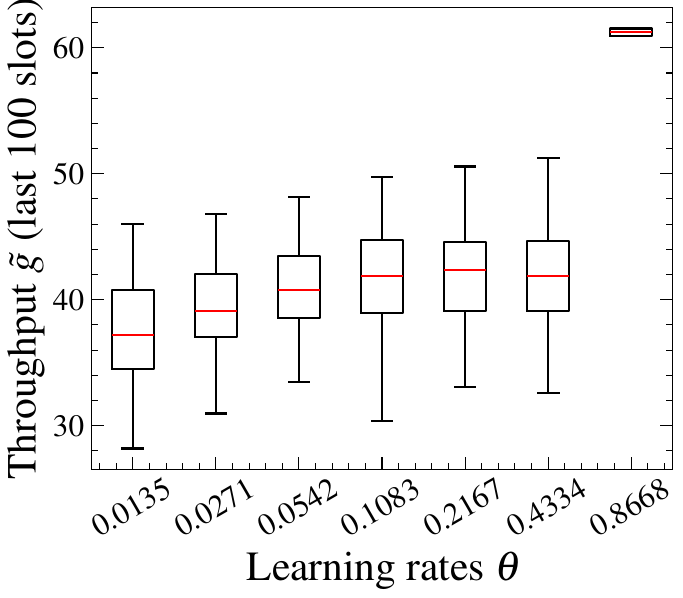}}
    \caption{Cumulative throughput $\tilde g$ for the last 100 of 1k slots for different learners in (a) stationary and (b) real SINR case.
    }
    \vspace{-6mm}
    \label{fig:tilde_g_real_stationary}
\end{figure}

Figure \ref{fig:avg_regret_volatile} shows the average dynamic regret of our proposed algorithm and the 3GPP-compliant competitors in the volatile case. In Figures \ref{fig:avg_regret_volatile_cont} and \ref{fig:avg_regret_volatile_discr}, we show the continuous, $\bm z \in \mathcal Z^{\text{c}}$, and discrete, $\bm z \in \mathcal Z$, decisions, respectively. In both cases, we observe that the average dynamic regret converges towards zero for $T\!=\!5$k slots; and in the latter case, where the decisions are actually implementable in practice, \texttt{CONTRA} outperforms the best-performing CHO-only 3GPP benchmark by 89.5\%. The average dynamic regret of the THO-only benchmarks stays almost constant for all slots (``stuck'' in sub-optimal decisions). Lastly, we observe a relatively small discretization error of 16.2\% measured between the continuous and discrete decision plots at $t=5$k.

From Figure \ref{fig:avg_regret_real}, the average dynamic regret converges again towards zero for the real SINR cases, showing that \texttt{CONTRA} is adaptable in all scenarios, with the gap between the continuous and discrete decisions being solely 1.83\%. On the other hand, the best CHO-only (THO-only) 3GPP-compliant benchmark attains 74\% (28.2\%) higher (lower) dynamic regret at $t\!=\!5$k. Yet, the THO-only benchmark is stuck in suboptimal (non-diminishing regret) decisions.

\begin{figure}[t]
    \centering
    \subfigure[]{\label{fig:avg_regret_volatile_cont}\includegraphics[scale=0.35]{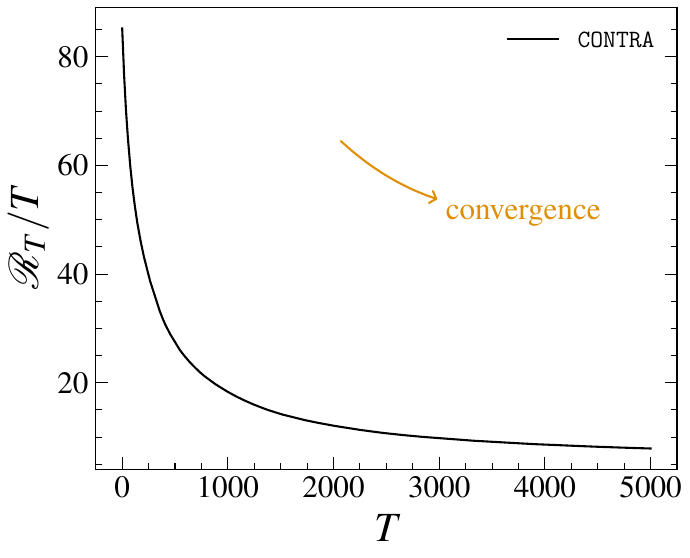}}\ \ \ 
    \subfigure[]{\label{fig:avg_regret_volatile_discr}\includegraphics[scale=0.35]{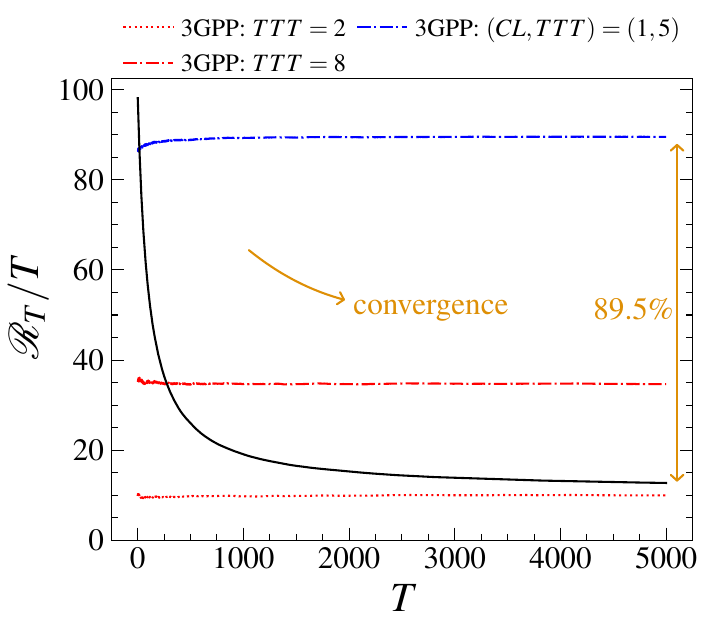}}
    \caption{Volatile scenario (SINR changes every 10 slots), for $T = 5$k slots, for the (a) continuous and (b) discrete decisions.
    }
    \vspace{-8mm}
    \label{fig:avg_regret_volatile}
\end{figure}

\begin{figure}[t]
    \centering
    \subfigure[]{\label{fig:avg_regret_real_cont}\includegraphics[scale=0.35]{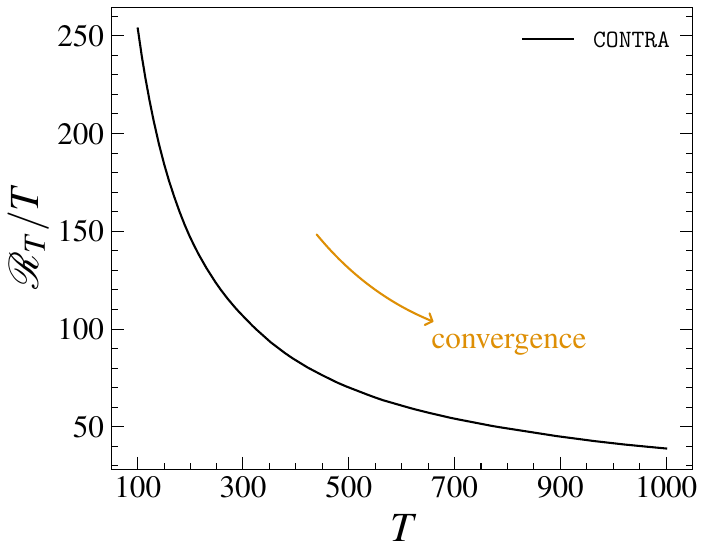}}\ \ \ 
    \subfigure[]{\label{fig:avg_regret_real_discr}\includegraphics[scale=0.35]{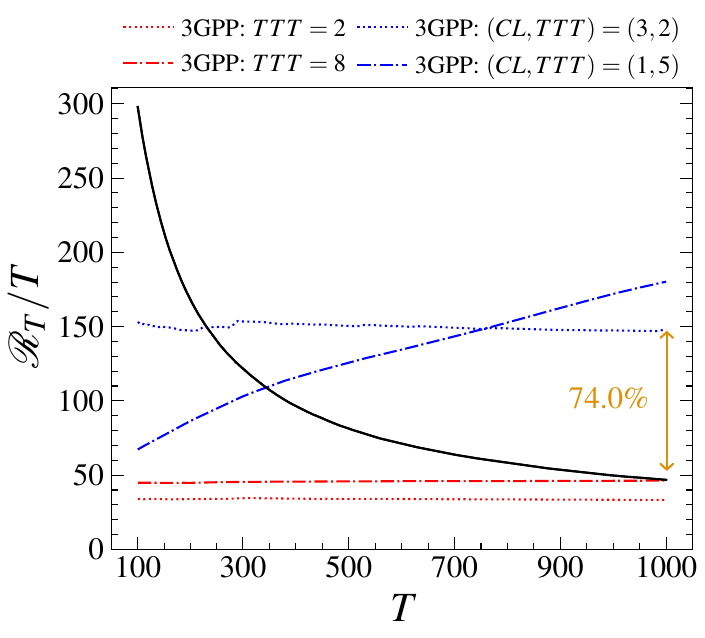}}
    \caption{Real-data SINR scenario for $T\!=\!1$k, for the (a) continuous and (b) discrete decisions.}    
    \label{fig:avg_regret_real}
\end{figure}

\color{black}

\rev{\subsection{Comparison with Reinforcement Learning}}

To compare \texttt{CONTRA} with RL approaches, we need to adjust the definitions of \emph{RL state} and \emph{RL action}. We define the \emph{RL state} with the SINRs of the previous time slots, see e.g., \cite {kelleler-jsac23, ppo_scc}. We intentionally avoid incorporating other UE features, such as velocity or ping-pong counters, to remain as aligned as possible with our problem formulation. We also define an \emph{RL action} that determines whether each user $i \in \mathcal I$ is THO- or CHO-enabled and how many cells to prepare in the latter case. In principle, 
the CHO decisiona should be modeled as the selection of any subset of candidate cells. This formulation captures the ``true’’ decision space of CHO, but in the case of RL, it leads to an exponential growth in the number of possible actions; specifically, $J + (2^J - 1)$ per user. The first term, $J$, corresponds to actions regarding THOs: a user can be assigned to any of the available $J$ cells. The second term (exponential), $2^J - 1$, corresponds to any non-empty subset of the $J$ cells that can get prepared (minus the empty set). 

Even when viewed purely from the search-space perspective, this exponential growth makes RL infeasible for moderate $J$, as exploration becomes prohibitively expensive. In contrast, \texttt{CONTRA} maintains a tractable search-space, avoiding the combinatorial explosion inherent to this RL formulation. To avoid the exponential explosion, we adopt a more compact parameterization for the RL approaches with only $2J$ actions per user. Here, the first $J$ terms still correspond to THOs, while the rest $J$ to CHOs, where the user prepares up to the top-$J$ strongest cells in terms of SINR in the previous slots. The reward/objective remains as in \texttt{CONTRA} (see Sec. \ref{sec:system_model}), namely, throughput minus switching costs.

The three RL benchmarks are: \textit{(i)} policy-gradient methods without a critic, through \texttt{REINFORCE} algorithm \cite{reinforce_williams}, \textit{(ii)} value-based methods through Deep Q-Network (\texttt{DQN}) \cite {kelleler-jsac23}, and \textit{(iii)} actor-critic methods with proximal updates through Proximal Policy Optimization (\texttt{PPO}) \cite{ppo_scc}. 

\texttt{REINFORCE} is a simpler method, that tries actions, observes total rewards, and adjusts the policy toward those that performed well. It does not estimate how good a state is (no critic), which makes it simple but noisy and data-hungry. More precisely, it implements a Monte Carlo policy-gradient method without a critic. Its policy network is a two-layer multilayer perceptron (MLP) with 128 neurons per layer and Tanh activations, followed by a softmax output layer producing a categorical distribution over all possible actions. After each episode, the algorithm computes discounted returns and updates the policy parameters using gradients of the log-probabilities weighted by returns. The optimizer is Adam with a learning rate of $3\times10^{-4}$.

\texttt{DQN} does not directly learn a policy; instead, it learns an action--value function $Q(state,action)$ that shows how good each action is. The agent then picks the action with the highest $Q$, allowing though a random exploration. More specifically, \texttt{DQN} consists of two hidden layers with 128 neurons and ReLU activations. Exploration follows an $\varepsilon$-greedy strategy with $\varepsilon_{\text{start}} = 1.0$, $\varepsilon_{\text{end}} = 0.05$, and exponential decay constant 200. A replay buffer of size 50k and minibatch size 64 are used, with soft target-network updates ($\tau = 0.01$) to improve stability. The loss is the mean-squared temporal-difference error, optimized with Adam at a learning rate of $10^{-3}$.

\texttt{PPO} combines both ideas. It learns an \emph{actor} (policy) and a \emph{critic} (value estimate), and ensures stable updates by clipping how much the policy can change each step. Both the actor and critic share a two-layer MLP backbone (128 neurons per layer, Tanh activations). The actor outputs a categorical distribution over all actions, while the critic outputs a scalar state-value estimate. Training uses generalized advantage estimation with $\lambda\!=\!0.95$ and the clipped objective parameter $\epsilon\!=\!0.2$; also, minibatches of 64 samples for 3 epochs per cycle are used. Both networks are optimized using Adam with rate $3\times10^{-4}$.

Unlike \texttt{CONTRA}, which learns and adapts online with a single exposure to the environment, these RL methods are trained over hundreds of simulated episodes, effectively granting them repeated access to the same system dynamics. This allows RL algorithms to asymptotically approximate the optimal policy under repeated trials, whereas \texttt{CONTRA} must adapt on-the-fly.

To showcase the disadvantages of RL even in a simple setting with four cells, we consider only two users with different types of mobility/SINR dynamics as can be seen in Figure \ref{fig:sinr_2UEs}: User 1 frequently switching between good and bad cells (high variability), and User 2 steadily moving toward a better SINR environment (gradual change). More precisely, the SINR of User 1 remains constant for 100 slots, then changes during a 50-slot transition period, and subsequently stabilizes again for another 100 slots, and so on. 

\begin{figure}[t]
    \centering
    \includegraphics[width=0.7\linewidth]{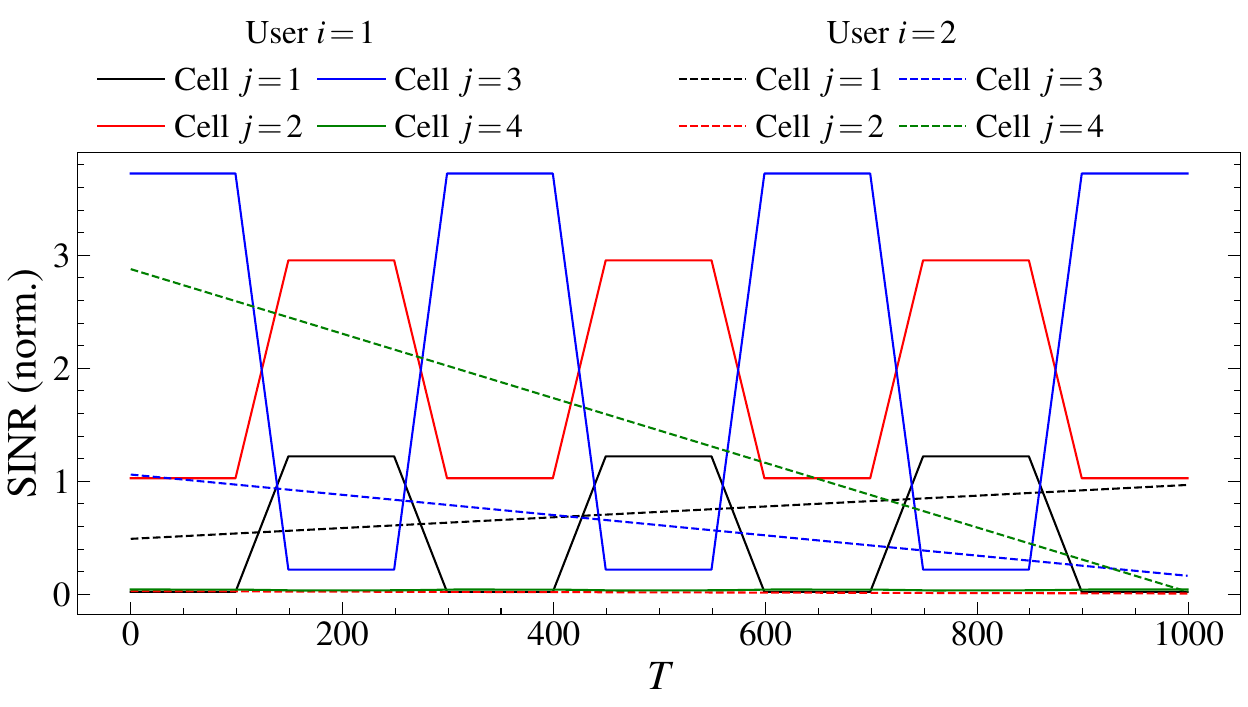}
    \caption{\rev{Simple deep-dive example for $I=2$ UEs, $J=4$ cells, and $T=1$k slots, with two SINR modes: one with more dynamic/abrupt changes (continuous lines) and one with more gradual changes (dotted lines).}}
    \vspace{-4mm}
    \label{fig:sinr_2UEs}
\end{figure}

In Figure \ref{fig:total_thr_costs_vs_RL}(a), we run each algorithm once (i.e., one episode), since our proposed method requires only a single pass to produce its decisions and learn. As expected, RL algorithms, and specifically \texttt{REINFORCE} and \texttt{PPO}, perform poorly (negative utility) as they require many interactions over the same conditions to learn. On the other hand, \texttt{DQN} can improve its decision in the same episode, achieving, however, 53.6\% lower utility than \texttt{CONTRA}. In Figure \ref{fig:total_thr_costs_vs_RL}(b), we allow the RL benchmarks to learn this simple scenario by running them for multiple episodes and reporting their final utilities. Compared to our algorithm that requires a single episode/pass to reach $\approx 2.5$k total utility, \texttt{REINFORCE}, \texttt{DQN}, and \texttt{PPO} reach until 95.2\%, 88.0\%, and 97.9\% of its performance, by running for 1.6k, 100, and 200 episodes, respectively. 

Finally, in the real SINR scenario, the performance gap becomes even more pronounced. From Figure \ref{fig:util_manyepisode_real}, and compared to our algorithm that requires a single pass to reach $\approx $175k total utility, \texttt{REINFORCE}, \texttt{DQN}, and \texttt{PPO} reach until 54.9\%, 52.6\%, and 76.6\% of its performance, by running for 2k, 500, and 500 episodes, respectively. The gap thus widens significantly in realistic network conditions, showing that \texttt{CONTRA} attains substantially higher performance without the extensive training required by RL-based approaches. Even with this simplified assumption, our approach outperforms the RL in the experiments executed with real SINR/UEs. This poor performance of RL approaches is expected, as they do not adapt to volatile or adversarial environments (known to converge under stationarity only), and do not offer performance guarantees as those we provide in Lemma \ref{lemma_main}.

\begin{figure}[t]
    \centering
    \subfigure[]{\includegraphics[width=0.42\linewidth]{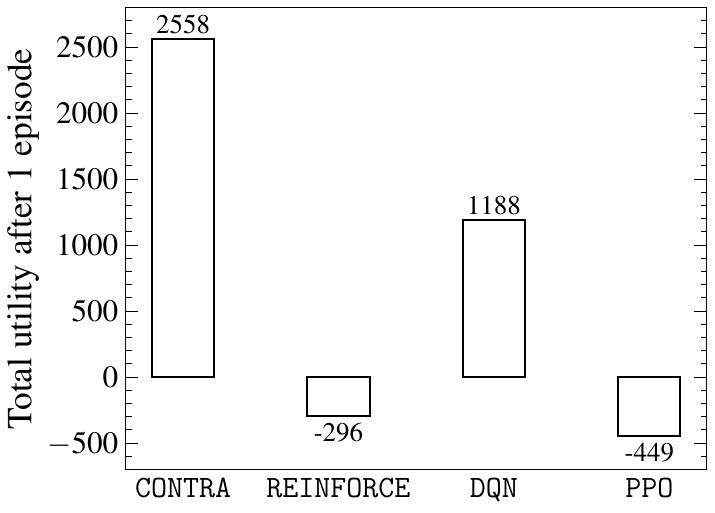}\label{fig:util_1episode_easy}}
    \subfigure[]{\includegraphics[width=0.55\linewidth]{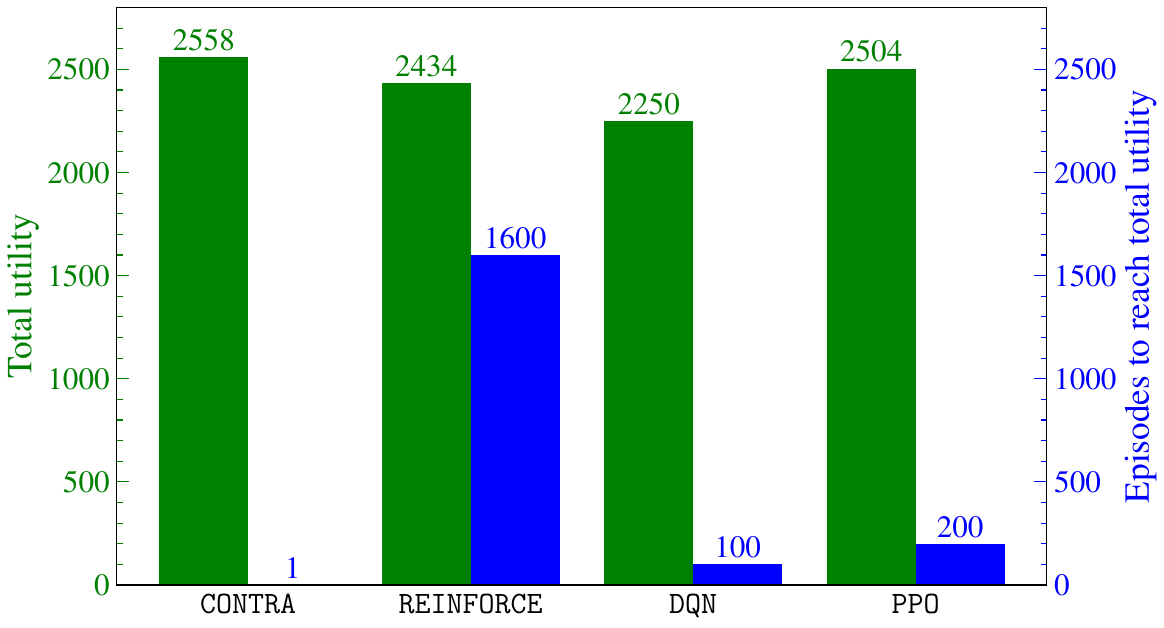}\label{fig:util_manyepisode_easy}}
    \caption{\rev{Total utility for (a) one episode and (b) multiple episodes, of \texttt{CONTRA} vs three RL benchmarks for a simple deep-dive example with $I=2$ UEs, $J=4$ cells, and $T=1$k slots, with the SINRs of Figure \ref{fig:sinr_2UEs}.}} 
    \label{fig:total_thr_costs_vs_RL}
\end{figure}

\begin{figure}[t]
    \centering
    \includegraphics[width=0.71\linewidth]{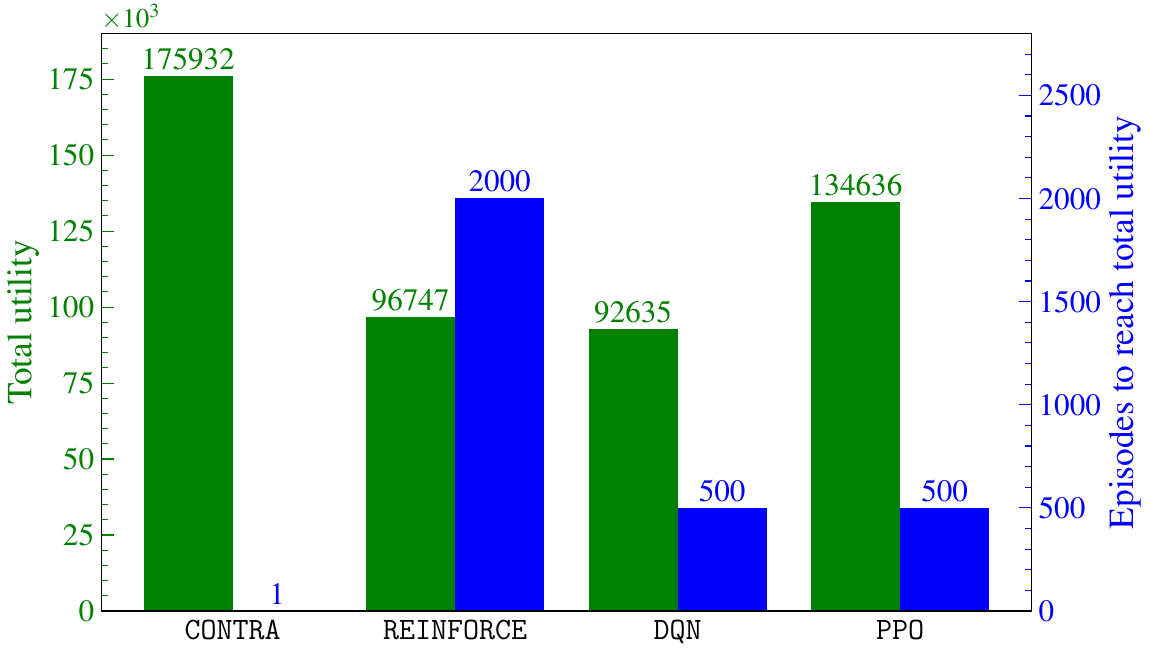}
    \caption{\rev{Total utilities (i.e., $\tilde{g}_t$) and number of episodes required to reach these values, for the real SINR scenario with $I=100$ UEs and $T=1$k slots.}}
    \label{fig:util_manyepisode_real}
    \vspace{-6mm}
\end{figure}

\color{black}

\section{Conclusions}\label{sec:conclusions}

\rev{This study is motivated by the demanding mobility support requirements in 6G, technological advances in handover (HO) mechanisms, and the availability of AI-based management solutions in O-RAN}. Its goal is to address a challenge that is critical as 5G deployments mature and 6G systems emerge: executing HOs rapidly and reliably, especially in dense deployments and high-frequency bands, where $(i)$ traditional HO (THO) mechanisms exhibit high failure rates and thus, increased delays, and $(ii)$ the newly introduced by 3GPP, Conditional HOs (CHOs), which tackle these issues by enabling proactive cell reservations, raise intricate trade-offs in signaling and resource utilization. To better understand these issues, we analyze countrywide datasets from a major operator in Europe; and the findings underscore the need for more adaptive and robust HO control. Motivated by this, we propose \texttt{CONTRA}, a meta-learning, provably-optimal, minimal-assumption, O-RAN-compatible algorithm that jointly optimizes THO and CHO and adapts to runtime observations, demonstrating significant improvements in real-world and synthetic scenarios. \rev{While our work focuses on O-RAN, the core ideas can be applied to other 3GPP-compliant RAN architectures that support real-time measurements and centralized decision making.}

\minorrev{\section*{Acknowledgments}
This work has been supported by the National Growth Fund through the Dutch 6G flagship project ``Future Network Services'', and the European Commission through Grant No. 101139270 (ORIGAMI), 101192462 (FLECON-6G), and 101167288 (ANTENNAE).}

\appendices
\ifCLASSOPTIONcaptionsoff
  \newpage
\fi

\bibliography{references_V2.bib}
\bibliographystyle{IEEEtran}

\appendix\label{sec:appendix_proofs}

\noindent\textbf{Proof of Lemma \ref{lemma_dom_grads}.}
First,
we observe that the diameter of the domain is bounded, i.e., $\| \bm z \!-\! \bm z'\|_2 \leq \! \| \bm x \!-\! \bm x'\|_2 \!+\! \| \bm y' \!-\! \bm y'\|_2 \! \leq \!
\sqrt{2I_{\text{THO}}} + \sqrt{I_{\text{CHO}}(J-1)} \!=\! D$, since the maximum distance for the former occurs when assignments $\bm x$ and $\bm x'$ are all different, and for the latter, when all cells are prepared in $\bm y$ and one cell is prepared per UE in $\bm y'$ (i.e., minimum preparation). Similarly, we prove the bound for the $C_t$-norm and its dual.

Finally, we compute the gradients component-wise and, for simplicity, we omit the time $t$ index. For $i \in \mathcal I_{\text{THO}}, j \in \mathcal J$ (similarly for $i \in \mathcal I_{\text{CHO}}$ and the $y_{ij}$), and since ${\partial l_j}/{\partial x_{ij}} = 1$, we have ${\partial \tilde g_t}/{\partial x_{ij}} = \log c_{ij} -  \log l_j - 1.$
Using $c_n \leq c_{\max}$ and $l_j \leq I_{\text{THO}} + I_{\text{CHO}}=I$, we can write:
$
\left| {\partial g_t}/{\partial x_{ij}} \right| \leq \max\left\{ \log c_{\max} - 1,\ \log I + 1 \right\} = M.
$
The proof is finalized, as there are $I \, J$ total components in the gradient vector; and likewise for the $C_t$-norm.

\noindent\textbf{Proof of Lemma \ref{lemma_main}.}
The proof tailors the result of \cite{kalntis_infocom25, zhang-smoothed-ol, zhang-18}, which, however, solve the UE-cell association problem (THO only). Eq. \eqref{regret-metric} can be rewritten: $\mathbb{E}\left[{\mathcal R}_T\right]=$
\begin{align}
&\sum_{t=1}^T \!\! \Big(\big(\tilde g_t(\bm z_t^{\star}) - \|\bm z_t^{\star}\!-\!\bm  z_{t-1}^{\star}\|_{C_t}\big) \! -\! \big(\tilde g_t(\bm{z}_t^\text{m})\!-\! \|\bm z_t^\text{m}\!-\!\bm  z_{t-1}^\text{m}\|_{C_t} \big)\!\Big) \notag \\ & \!\!+\! 
    \mathbb{E}\!\Bigg[\!\sum_{t=1}^T\!\! \Big(\!\!\big(\tilde g_t(\bm{z}_t^\text{m}) \!-\! \notag \|\bm z_t^\text{m}\!-\!\bm  z_{t-1}^\text{m}\|_{C_t}\big) \!-\! \big(\tilde g_t(\bm z_t) \!-\! \|\bm z_t\!-\!\bm z_{t-1}\|_{C_t}\big)\!\!\Big)\!\!\Bigg], \label{eq:prep_and_discr_err}
\end{align}
where we bound the expected dynamic regret of the relaxed preparation decisions $\{\bm z_t^\text{m}\}_t$ and, afterwards, the error of the discrete decisions $\{\bm z_t\}_t$ as can be seen from the first and second terms, respectively. From the first term in eq. \eqref{eq:lemma_res}, sublinear dynamic regret can be achieved for the relaxed decisions. It follows by bounding the regret of each expert w.r.t. the benchmark and then the regret of the meta-learner w.r.t. any expert. For the extra cost/error introduced due to the quantization routine $Q_{\mathcal{Z}}$, we bound the term as follows:
\[\mathbb{E}\!\!\left[ \!\sum_{t=1}^T\!\! \Big(\!\tilde g_t(\bm{z}_t^\text{m})\!-\! \|\bm z_t^\text{m}\!-\!\bm  z_{t-1}^\text{m}\|_{C_t} \!-\!\tilde g_t(\bm z_t) \!+\! \|\bm z_t\!-\!\bm  z_{t-1}\|_{C_t}\! \Big)\!\right] \!\!\stackrel{(*)}\leq \]
\[\Big( \!G+\sqrt{a_{\max} \!+\! b_{\max}}\Big) \! \sum_{t=1}^T \!\! \sqrt{\mathbb{E}\!\left[ \sum_{i=1}^I \! \sum_{j=1}^J \! \Big(\!\! {z}_{ij}(t) \!-\! \mathbb{E}[{z}_{ij}(t)] \Big)^{\!2} \! \right]} \!\!\! \stackrel{(**)}\leq\]
\[\hspace{-0.7cm}T\Big(G+\sqrt{a_{\max}+b_{\max}}\Big)\sqrt{I_{\text{CHO}}J/4 + I_{\text{THO}}(1-1/J)},
\]
\noindent leading to eq. \eqref{eq:lemma_res}. Specifically, for $(*)$, we use Jensen's inequality, the linearity of expectation, and Lipschitz continuity. For the latter, consider $\bm z_1, \bm z_2 \in \mathcal{Z}^\text{c}$, use the triangle inequality and Lipschitz constant $G$ of $g_t$ and the Euclidean norm, and define $\bm z_{t-1}\!=\!C$, to get:
\begin{align*}
    \Big|&\big(\tilde g_t(\bm{z}_1) -  \|\bm{z}_1 - C\|_{C_t} \big) - \big(\tilde g_t(\bm{z}_2) - \|\bm{z}_2 - C\|_{C_t} \big)\Big| \leq \\ &
    \big|\tilde g_t(\bm{z}_1) - \tilde g_t(\bm{z}_2)\big| + \big|\|\bm{z}_1 - C\|_{C_t} - \|\bm{z}_2 - C\|_{C_t}\big|  \leq \\ &
    \Big(G+ \sqrt{a_{\max}+b_{\max}}\Big) \|\bm{z}_1 - \bm{z}_2\|.
\end{align*}
\noindent Lastly, note that $(**)$ calculates the variance of $\bm z_t$. For $\bm y_t \in \{0,1\}^{I_{\text{CHO}} \cdot J}$, the maximum variance each component can obtain is $1/4$ (variance per element is $q(1-q)$ due to Bernoulli trial with prob $q$, and maximizes for $q=0.5$); hence, we can upper bound the part for CHO-enabled UEs expression with $I_{\text{CHO}}J/4$. For $\bm x_t \in \{0,1\}^{I_{\text{THO}} \cdot J}$, the binary vector is subject to a simplex
per user; thus, each sum of THO-enabled users w.r.t. $j$ is upper bounded by variance of $1-1/J$, and the overall part of the term in the square root has bound $I_{\text{THO}}(1-1/J)$.

\end{document}